\newcommand*{\Union}{\bigcup}
\newtheorem{theorem}{Theorem}
\newtheorem{lemma}[theorem]{Lemma}
\newtheorem{corollary}[theorem]{Corollary}
\newtheorem{remark}[theorem]{Remark}
\newtheorem{proposition}[theorem]{Proposition}
\newtheorem{fact}{Fact}
\newcommand\mar[1]{\textcolor{black}{#1}}
\newcommand\ces[1]{\textcolor{black}{#1}}
\newcommand\fer[1]{\textcolor{black}{#1}}
\author{Ferhat Alkan\affiliationmark{1}
\and T\"{u}rker B{\i}y{\i}ko\u{g}lu\affiliationmark{2}
\thanks{Author is also partially supported by TUBA GEBiP/2009 and ESF EUROCORES TUBITAK Grant 210T173.}
\and Marc Demange\affiliationmark{3}
\and Cesim Erten\affiliationmark{4}
\thanks{Corresponding author. Part of this work was done while the author was visiting CIPF, Valencia. The author is also partially supported by TUBITAK-BIDEB Grant 1059B191501053.}}
\title{Structure of conflict graphs in constrained alignment  problems and algorithms\thanks{This work was partially supported by TUBITAK grant 112E137.}}
\affiliation{Division of Oncogenomics, The Netherlands Cancer Institute, Amsterdam, The Netherlands\\
2. Cadde, 12/9, 06500, Ankara, Turkey\\
School of  Science, RMIT University,
Melbourne, Australia\\
Computer Engineering, Antalya Bilim University, Antalya, Turkey}
\keywords{Graph algorithms, graph alignment, constrained alignments, conflict graph, maximum independent set, protein-protein interaction networks, functional orthologs, $H$-free graphs}
\begin{document}
\publicationdetails{21}{2019}{4}{10}{3857}


\maketitle

 \begin{abstract}
We consider the constrained graph alignment problem which has applications in biological network analysis. 
Given two input graphs $G_1=(V_1,E_1), G_2=(V_2,E_2)$, two vertices \mar{$u_1,v_1$ of $G_1$ paired respectively to two vertices $u_2,v_2$ of $G_2$}  
induce an  \emph{ edge conservation} if 
\mar{$u_1,v_1$ and $u_2,v_2$ are adjacent in their respective graphs}. 
The goal is to provide a one-to-one mapping between \mar{some} vertices 
of the input graphs in order to maximize edge conservation. However the allowed mappings are restricted since each vertex from $V_1$ (resp. $V_2$) is allowed to be mapped to at most $m_1$ (resp. $m_2$) specified vertices in $V_2$ (resp. $V_1$). Most of the results in this paper deal with the case $m_2=1$ which attracted most attention in the related literature.
We formulate the problem as a maximum independent set problem in a related   {\em conflict graph} and investigate structural properties of this graph in terms of forbidden subgraphs. We are interested, in particular, in excluding certain wheels, fans, cliques or claws (all terms are defined in the paper), which in turn corresponds to excluding certain cycles, paths, cliques or independent sets in the neighborhood of each vertex. Then, we investigate algorithmic consequences  of some of these properties, which illustrates the potential of this approach and raises new horizons for further works. In particular this approach allows us to reinterpret a known polynomial case in terms of conflict graph and to improve known approximation and fixed-parameter tractability results through efficiently solving the maximum independent set problem in conflict graphs. Some of our new approximation results  involve  approximation ratios that are functions of the optimal value, in particular its square root; this  kind of results cannot be achieved for maximum independent set in general graphs.  
\end{abstract}

\section{Introduction}\label{sec:intro}
The  \emph{ graph alignment} problem has important applications in biological network alignment, in particular 
in the alignments of protein-protein interaction (PPI) 
networks~(\citet{AbakaBE13,AladagE13,sharan06,ZaslavskiyBV09,beams13}). 
Undirected graphs $G_1=(V_1,E_1), G_2=(V_2,E_2)$ (not necessarily connected) correspond to PPI networks 
for a pair of species, where  
the vertex sets $V_1, V_2$ represent the sets of proteins, and 
$E_1, E_2$ represent 
 the sets of known protein interactions pertaining to the   
networks of species under consideration. 
The informal goal is to find \mar{similar patterns between two PPI networks by identifying} a one-to-one mapping 
\mar{between some vertices of} $V_1$ and $V_2$ that maximizes the "similarity" of the mapped proteins, usually
scored with respect to the aminoacid sequence similarity 
and 
the conservation of interactions between mapped proteins. 
Functional orthology is an important application that
serves as the main motivation to study the alignment problems as part of a comparative analysis of PPI networks.  
\fer{A successful protein interaction network alignment across multiple species could provide a basis for deciding the proteins with similar functions, which may further 
be used in predicting functions of proteins with unknown functions or in verifying those with known functions, in detecting common orthologous pathways between species, or in reconstructing the evolutionary dynamics~(\citet{pmid28194172})}. 

A graph theory problem related to the biological network alignment problem is that 
of finding the  \emph{ maximum common edge subgraph} (MCES) of a pair of graphs, a problem 
commonly employed in 
the matchings of 2D/3D chemical structures~(\citet{Raymond02maximumcommon}). 
The MCES of two undirected graphs 
$G_1, G_2$ is a common subgraph (not necessarily induced) that contains the largest number of edges common to both $G_1$
and $G_2$. The NP-hardness of the MCES problem proposed in~\citet{GareyJ79} trivially implies 
that the biological network alignment problem is also NP-hard. 

A specific version of the problem
reduces its size by restricting the output alignment mappings to those 
chosen among certain subsets of protein mappings. The subsets of allowed mappings are assumed 
to be predetermined via some measure of similarity, usually that of sequence similarity~(\citet{AbakaBE13,ZaslavskiyBV09}). 
The  \emph{ constrained alignment} problem we consider herein can be considered as a graph theoretical generalization 
of this biological network alignment problem version. 
Formally, an instance $\prec G_1,G_2,S \succ$ is defined by  a pair of undirected graphs $G_1=(V_1,E_1), G_2=(V_2,E_2)$ and a bipartite graph $S=(V_1\cup V_2,E_S)$ with parts $V_1$ and $V_2$ \mar{representing possible matching between vertices of $G_1$ and vertices of $G_2$}. 
For $i=1,2$, we denote by $m_i$, the maximum degree in $S$ of vertices from part $V_i$.  
A  \emph{ legal alignment} $A$ is a matching of $S$, \mar{i.e., a set of independent edges (pairwise non adjacent). 
An edge $ab\in E_1$  is said to be {\em conserved}, if there is an edge $cd\in E_2$ such that $bc$ and $ad$ are in $A$, or $ac$ and $bd$ are in $A$. Then, the edge $cd$ is equivalently called conserved and, by definition of a matching, the number of conserved edges of $G_1$ is equal to the number of conserved edges of $G_2$}.
The constrained alignment problem is that of finding a legal alignment that 
maximizes \mar{the number of conserved edges in $G_1$ (or equivalently in $G_2$)}. 

Several related problems have been studied previously like, 
for instance, the {\em contact map overlap} problem introduced in~\citet{Goldman1999}. The goal  is to maximize the number of conserved edges;
however 
contrary to the constrained alignment problem, no constraint is given in terms of the bipartite graph $S$. Furthermore their problem definition assumes a linear order of the vertices of both $G_1, G_2$
which should be preserved by the output mapping. 
The problem 
of $(\mu_{G_1},\mu_{G_2})$-\emph{ matching with orthologies}, 
was introduced in~\citet{Fagnot2008}. Similar to the constrained alignment problem, it is to find a mapping 
\mar{respecting a set of constraints} represented by a bipartite graph $S$ \mar{but all edges of $G_1$ are requested to be conserved. 
Assuming $m_i=\mu_{G_i}, i=1,2$ and denoting by  
$\Delta_i=\Delta(G_i), i=1,2$  for an instance of the problem, where $\Delta(G)$ denotes the maximum degree of graph $G$, the problem of $(\mu_{G_1},\mu_{G_2})$-{matching with orthologies} is shown NP-complete 
even when $m_1=3, m_2=2$ and $G_1$ and $G_2$ are bipartite, $\Delta_1\leq 1$ and $\Delta_2\leq 2$, or if $m_1=3, m_2=1$ and $\Delta_1\leq 3, \Delta_2\leq 4$.} It is linear-time solvable if $m_1=2$ and  $m_2\mar{\in}O(1)$ (see also~\citet{Fertin200990}). \mar{Finally,  
 the problem  \emph{ MAX}$(\mu_{G_1},\mu_{G_2})$ 
 considered in~\citet{Fertin200990} is the optimization version of $(\mu_{G_1},\mu_{G_2})$-{matching with orthologies} with the objective to maximize the number of conserved edges. It is almost the same as the constrained alignment problem with $m_i=\mu_{G_i}, i=1,2$ with the additional requirement that every vertex of $G_1$ is  mapped to a vertex in $G_2$. We discuss more precisely the relations between these problems in Section~\ref{sec:definitions}}.  
 In~\citet{Fertin200990},   
 only the case $m_2=\mu_{G_2}=1$ is considered.  
It is shown APX-hard even if $m_1=2$ and $m_2=1$ (APX-complete if $G_1$  has bounded degree) and both graphs are bipartite.  
 They also propose several approximability and 
fixed-parameter tractability results (see~\citet{ausiellobook} and \citet{ParameterizedComplexity} for definitions about approximation and parameterized complexity, respectively). In particular,
they show that the problem can be approximated within ratio $2\lceil3\Delta_1/5\rceil$
for even $\Delta_1$ and ratio $2\lceil(3\Delta_1+2)/5\rceil$ for odd 
$\Delta_1$. They also show that the problem is fixed-parameter tractable 
on the size of the output assuming $m_2=1$, $m_1$ is constant and $G_1$ has a bounded degree.

In this paper, we consider the maximum constrained alignment problem as a maximum independent set  problem in a related  \emph{ conflict graph}, constructed from $G_1, G_2$, and $S$. Our aim is to investigate structural properties of this conflict graph in order to derive efficient  algorithms for the alignment problem. Although  
a conflict graph is also proposed in~\citet{Fertin200990} for $m_2=1$, 
with in particular a fixed-parameter tractability result based on a degree argument, 
no further structural property is provided. Here, we deepen this approach and strengthen algorithmic results.
Our main results and comparison with known results are given in Tables~\ref{table-structure},  \ref{table-approx} and \ref{table-parameter} 
\mar{at the end of this section}. 

Table~\ref{table-structure} shows our main structural results: the basic metrics of the graph \-- size and maximum degree \-- in the most general case as well as forbidden subgraphs for the case $m_2=1$. Some of these results have direct algorithmic consequences but even those without algorithmic applications are interesting, in particular since they motivate some graph classes for further studies. This is in particular the case for classes of graphs excluding some wheels or fans (related definitions are given in Section~\ref{sec:definitions}).

Table~\ref{table-approx} describes our approximation results that
extend the results in~\citet{Fertin200990} in several ways; it also illustrates  the potential of our approach. For instance,  an analysis of the degree of the conflict graph, generalizing the one in~\citet{Fertin200990},  immediately leads to an approximation ratio for the  general case  with a ratio $o(\Delta_1+\Delta_2)$ when $m_1, m_2$ are constant; it is improved to $o(\Delta_1)$ if $m_2=1$ and $m_1$ is constant. For the case $m_2=1$ and $m_1$ constant, we propose as well a $O\left(\frac{|V_1|}{\log(|V_1|)}\right)$-approximation as well as a $O(\sqrt{\beta(I)})$-approximation, where $\beta(I)$ is the optimal value of instance $I$. To our knowledge such kinds of ratios are totally new for this problem. Finally, one of our structural results gives a $(min(\Delta_1, \Delta_2)+1)$ approximation if $m_2=1$, improving also the previous known ratios. 

Table~\ref{table-parameter} presents two fixed parameter tractability results with respect to the size of the output. Both extend the results of~\citet{Fertin200990} to more general cases and both are direct consequences of structural results and known maximum independent set results. 

Finally, a last illustration of the potential of the maximum independent set approach is the case where $m_2=1$ and $G_1$ is acyclic. This case was already shown polynomial in~\citet{AbakaBE13}, using a specific dynamic programming method. A structural analysis of the conflict graphs allows to prove the same result and to interpret it as a maximum stable set polynomial case. Moreover it allows us to derive an explicit expression of the related complexity. Table~\ref{table-complex} sums-up all known complexity results for the maximum constrained alignment problem. \mar{Despite being obtained for {MAX}$(\mu_{G_1},\mu_{G_2})$ the hardness results also apply to the constrained alignment problem as noticed at the end of Section~\ref{sec:definitions}}.

 The paper is organized as follows. Section~\ref{sec:definitions} gives the main definitions, introduces the conflict graph and investigates its first characteristics (size and degree), leading  to first approximation and fixed parameter tractability results.
 Section~\ref{sec:m2is1} is dedicated to the case $m_2=1$ that raised the main attention in the literature.
 We first investigate in Subsection~\ref{subsec:struct-approx} some structural properties   of the conflict graph in terms of forbidden subgraphs (wheels and fans and cliques and claws) with their algorithmic consequences. This part constitutes our main contribution. Then, in Subsection~\ref{subsec:acyclic}, we revisit the case where $m_2=1$ and $G_1$ is acyclic. Finally Section~\ref{sec:conclude}  discusses further research directions.
 

\begin{center}
\begin{table}[ht]
\centering
\footnotesize
\begin{tabular}{|c|c|c|c|c|}

\hline 
$m_2$ &\multicolumn{1}{|c|}{$\geq 2$}& \multicolumn{3}{|c|}{$m_2=1$}\\
\hline
$m_1$ &\multicolumn{3}{|c|}{$\geq 3$}& $m_1=2$\\
\hline
$G_1$ and $G_2$& &$G_1$ acyclic&&\\
\hline
\hline
\multirow{4}{*}{Structural 
}&\multicolumn{1}{|c|}{$|V_{\mathcal{C}}|\leq \min\limits_{i=1,2}(m_i^2|E_i|)$}&&$W_t$-free, $t\geq 7$&$W_t$-free, $t\geq 5$\\
\multirow{4}{*}{property}&\multicolumn{1}{|c|}{(Lem.~\ref{vertexlemma})}&Weakly triangulated&\multicolumn{2}{|c|}{(Th.~\ref{lemW8})}\\
\cline{4-5}
&\multicolumn{1}{|c|}{}&(Th.~\ref{th:weaklytriangl})&$F_8$-free&$F_6$-free\\
\multirow{3}{*}{of $\mathcal{C}$}&\multicolumn{1}{|c|}{$\Delta(\mathcal{C})\leq \sum\limits_{i=1,2}2\Delta_im_i(m_i-1)$}&&\multicolumn{2}{|c|}{(Th.~\ref{Th:F8})}\\
\cline{4-5}
&\multicolumn{1}{|c|}{(Lem.~\ref{deglemma})}& &\multicolumn{2}{|c|}{$K_{1+{m_1}^2}$-free}\\
&\multicolumn{1}{|c|}{} &&\multicolumn{2}{|c|}{(Th.~\ref{lem_clique})}\\
\cline{4-5}
&\multicolumn{1}{|c|}{Bound of $|E_{\mathcal{C}}|$ using the first Zagreb Index}&&\multicolumn{2}{|c|}{$(2\Delta_{min}+2)$-free}\\
&\multicolumn{1}{|c|}{(Lem.~\ref{edgelemma})}&&\multicolumn{2}{|c|}{(Th.~\ref{claw})}\\
\hline
\end{tabular}
\caption{Main structural Properties of $\mathcal{C}$.}
\label{table-structure}
\end{table}
\end{center}

\begin{center}
\begin{table}[ht]
\centering
\footnotesize
\begin{tabular}{|c|c|c|c|}
\hline 
$m_2$ &\multicolumn{1}{|c|}{$\geq 2$}& \multicolumn{2}{|c|}{$m_2=1$}\\
\hline
$m_1$ &\multicolumn{2}{|c|}{$\geq 3$}& $m_1=2$\\
\hline
\hline
\multirow{12}{*}{Approximation ratio} &\multirow{6}{*}{$O\left(\frac{(\Delta_1+\Delta_2)\log \log(\Delta_1+\Delta_2)}{\log(\Delta_1+\Delta_2)}\right)$}&\multicolumn{2}{|c|}{$\frac{6\Delta_1}{5}+{\rm cst}$}\\
&&\multicolumn{2}{|c|}{(\citet{Fertin200990})}\\
\cline{3-4}
&&\multicolumn{2}{|c|}{$O\left(\frac{\Delta_1\log \log(\Delta_1)}{\log(\Delta_1)}\right)$}\\
&&\multicolumn{2}{|c|}{($m_1$ constant \-- Prop.~\ref{cor:approxD1})}\\
\cline{3-4}
&($m_i$ const., $i=1,2$ \-- Prop.~\ref{prop:generalapprox})&$\sqrt{3\beta(I)/2}$&$\sqrt{\beta(I)}$\\
&&\multicolumn{2}{|c|}{(Prop.~\ref{pro:approxsqrt})}\\
\cline{3-4}
&&\multicolumn{2}{|c|}{$\forall K>0, \left\lceil \frac{|V_1|}{K\log(|V_1|)}\right\rceil$}\\
&&\multicolumn{2}{|c|}{($m_1$ constant \-- Th.~\ref{th:approxlog})}\\
\cline{3-4}
&&\multicolumn{2}{|c|}{$\Delta_{min}+1$}\\
&&\multicolumn{2}{|c|}{(Prop.~\ref{cor16})}\\
\hline
\end{tabular}
\caption{Main approximation results ($\beta(I)$ denotes the optimal value of instance $I$).}
\label{table-approx}
\end{table}
\end{center}

\begin{center}
\begin{table}[ht]
\centering
\footnotesize
\begin{tabular}{|c|c|c|c|}
\hline 
$m_2$ &\multicolumn{1}{|c|}{Bounded $\geq 2$}& \multicolumn{2}{|c|}{$m_2=1$}\\
\hline
$m_1$ &\multicolumn{3}{|c|}{Bounded $\geq 3$}\\
\hline
$G_1$ and $G_2$& \multicolumn{2}{|c|}{Bounded degree}&Any degree\\
\hline
\hline
\multirow{2}{*}{Parameterized tractability} &FTP&FTP&FTP\\
&(Prop.~\ref{prop:parameter1})&(\citet{Fertin200990})&(Prop.~\ref{cor14})\\
\hline
\end{tabular}
\caption{FTP results parameterized by the size of the output}
\label{table-parameter}
\end{table}
\end{center}

\begin{center}
\begin{table}[ht]
\centering
\footnotesize
\begin{tabular}{|c|c|c|c|}

\hline 
$m_2$ &\multicolumn{3}{|c|}{$\geq 1$}\\
\hline
$m_1$ &\multicolumn{3}{|c|}{$\geq 2$}\\

\hline
\multirow{2}{*}{$G_1$ and $G_2$}&\multicolumn{2}{|c|}{ 
Even bipartite}&\multirow{2}{*}{$G_1$ acyclic}\\
\cline{2-3}
&Any degree & Bounded degree&\\
\hline
\hline
\multirow{2}{*}{Complexity}  &APX-hard&APX-complete&Polynomial\\
&\multicolumn{2}{|c|}{(\citet{Fertin200990})}&(\citet{AbakaBE13} and~Subs.~\ref{subsec:acyclic})\\
\hline
\end{tabular}
\caption{Complexity of the constrained alignment problem}
\label{table-complex}
\end{table}
\end{center}


\newpage

 \section{Definitions and first remarks}\label{sec:definitions}

\subsection{Main definitions and the considered problem}\label{sub:def}

For all graph-theoretical definitions not given here, the reader is referred to~\citet{golumbicbook}. \mar{A  \emph{ matching} in a graph is a set of independent edges, i.e., pairwise non adjacent. The extremities of the edges in the matching are called  \emph{ saturated}.}
For any $t\geq 2$, $P_t$ denotes a path with $t$ vertices ( \emph{ t-path}), $C_t$ denotes a  cycle with $t$ vertices ( \emph{ t-cycle}) and $K_t$ denotes a clique with $t$ vertices ( \emph{ t-clique}). \mar{A $P_t$ or a $C_t$ will be denoted as list of successive vertices like $x_1x_2\cdots x_t$. In the case of a $t$-path $x_1$ and $x_t$ are the extremities while, in the case of a $t$-cycle, $x_1$ is any vertex and the order correspond to one of the two possible orientations of the cycle. Sometimes, when a confusion is possible, the $t$-cycle will be denoted $x_1x_2\cdots x_tx_1$ to distinguish it from a $t$-path.}
Denote the complement of $G$ with $\overline{G}$. An  \emph{ induced subgraph} of $G=(V,E)$ is a subgraph of $G$ induced by a subset of vertices, $X\subset V$. It will be denoted by $G[X]$. Given a graph $H$, $G$ will be called {\em $H$-free} if it does not have any induced subgraph isomorphic to $H$. A partial graph of $G=(V,E)$ is a graph $G'=(V,E')$ with $E'\subset E$ and a partial induced subgraph is a partial graph of an induced subgraph. For a vertex $v\in V$ we will denote by $N(v)$ its (open) neighborhood and by $N[v]=N(v)\cup\{v\}$ its close neighborhood. For any vertex $v$ we will denote by $G_v=G[N[v]]$ the subgraph induced by $v$ and its neighborhood. For a vertex $v\in V$, $d_G(v)$ is its degree in $G$. When no ambiguity may occur, we simply denote $\Delta$ instead of $\Delta(G)=\max_{v\in V}(d_G(v))$.

A graph is called {\em weakly triangulated} if it is $C_t$-free and $\overline{C_t}$-free, for $t\geq 5$.

For $t\geq 3$, a {\em wheel} $W_t$ is a graph consisting of a $t$-cycle $C_t$ with an additional vertex, called {\em center}, adjacent to all the vertices of the  cycle $C_t$. A fan graph $F_t$ consists of a path $P_t$ with $t$ vertices and a new vertex $v$ 
that is adjacent to all the vertices of the path. As a consequence, a graph $G=(V,E)$ is $W_t$-free (resp. $F_t$-free) if and only if, for every vertex $v\in V$, $G_v$ is $C_t$-free (resp. $P_t$-free).

An {\em independent set} is a set of  pairwise non adjacent vertices, i.e., they induced a graph without any edge. 
 Given a graph $G$, $\alpha(G)$ denotes its independent number, i.e., the maximum size of an independent set in $G$. Consider a graph class $\mathcal{G}$ and a polynomial algorithm 
 determining, for every graph $G\in \mathcal{G}$ of a graph class, an independent set of size $\lambda(G)$, is said to guarantee the approximation ratio of $\rho(G)$, for a function $\rho\geq 1$, on $\mathcal{G}$ if:
 $$\forall G\in \mathcal{G},\  \frac{\alpha(G)}{\lambda(G)}\leq \rho(G)$$
 
 Polynomial approximation algorithms are defined similarly for other graph maximization problems. \mar{If an algorithm guarantees a ratio that belongs to the class of functions $O(f)$ (resp. $o(f)$), then we will simply say that the algorithm guarantees a ratio of $O(f)$ (resp. $o(f)$) or constitutes a $O(f)$- (resp. $o(f)$-)approximation.} The reader is referred to \cite{ausiellobook} for all concepts in approximation not defined here. Throughout the paper we only use natural logarithms, so $\log$ stands for $\log_e$.   

\mar{Finally, in Subsection~\ref{sec:anym1m2}, we will use the {\em first Zagreb index} of a graph $G$; it is denoted $M_1(G)$. $M_1(G)$ is defined as the sum of  squares of  degrees of the vertices. It has been extensively studied, in particular for its interest in computational chemistry~(see, e.g. \citet{Zagreb-30years} for an introduction to this index).}\\

\noindent
\mar{The constrained alignment problem is formally defined as follows:}

\mar{
\begin{tabular}{ll}
\textbf{ Input:}& $I=\prec G_1, G_2, S\succ$, where $G_1=(V_1,E_1), G_2=(V_2,E_2)$ are undirected graphs \\
& and  $S=(V_1\cup V_2, E_S)$  is a bipartite graph with parts $V_1,V_2$;\\
& $I$ will be called an {\em instance}.\\
\textbf{ Output:}& A matching $A$ of $S$, called legal alignment;\\
\textbf{ Objective:}& Maximize the number of conserved edges in $G_1$, or equivalently in $G_2$, i.e., \\
& the number of pairs $(ab,cd)\in E_1\times E_2$, where $ad,bc\in A$ or $ac,bd\in A$.\\
&
\end{tabular}}

\ces{For the ease of description, the edges of the bipartite graph $S$ will be called  \emph{ similarity edges}.}
\mar{A legal alignment is called {\em minimal} if the removal of any similarity edge in the alignment creates an alignment that conserves less edges. Any legal alignment includes at least one minimal alignment and consequently,  an optimal minimal alignment is an optimal alignment. Therefore, we can restrict ourselves to minimal alignments.}

\mar{We conclude this subsection with few remarks comparing the constrained alignment problem and relaqted problems introduced in Section~\ref{sec:intro}. Note that the conserved edges of $G_1$ and $G_2$ as well as their extremities respectively induce  isomorphic partial subgraphs of $G_1$ and  $G_2$. So, if $S$ is a complete bipartite graph, then the problem corresponds to finding two isomorphic partial subgraphs of $G_1$ and $G_2$ with a maximum number of edges, which is exactly the maximum common edge subgraph.} 
\mar{However, in our case, the bipartite graph $S$ constraints the possible  isomorphisms since a vertex of $V_1$ (resp. $V_2$) can only be mapped to one of its neighbors in $S$. In an applied context, such constraints represent a priori knowledge about the system that makes only some matchings meaningful.}

\mar{The only difference with the problem  \emph{ MAX}$(\mu_{G_1},\mu_{G_2})$,  with $m_i=\mu_{G_i}, i=1,2$ 
 (\citet{Fertin200990}), is that in this latter problem, the matching $A$ is required to saturate all vertices in $G_1$, thus defining an injective (one-to-one) mapping from $V_1$ to $V_2$. Contrary to the problems considered in~\citet{Fagnot2008,Fertin200990}, our problem is symmetric in $G_1,G_2$. All our results can be equivalently formulated by swapping indexes 1 and 2. When we will assume that one of $G_1,G_2$ has a specific structure, in particular acyclic like in Subsection~\ref{subsec:acyclic}, we can assume without loss of generality that the condition holds for $G_1$.  Roughly speaking, the problems considered in~\citet{Fagnot2008,Fertin200990} correspond to detecting, in $G_2$ a specific structure as close as possible to the pattern represented by $G_1$. Our version however,   aims to detect similar patterns in the two graphs. We believe that both versions make sense for the suggested  applications.}

 \mar{With the constraint for the solution to define an injective mapping from $V_1$ to $V_2$, some instances of {MAX}$(\mu_{G_1},\mu_{G_2})$ may have no feasible solution while every instance of the constrained alignment problem has at least one feasible solution. For this reason, \citet{Fertin200990} restrict their problem to the so called  \emph{ trim instances} for which $S$ has a matching saturating $V_1$, every vertex in $V_2$ has a degree at least~1 in $S$ and there is no  \emph{ bad edge} in $G_1$, i.e., an edge that cannot be conserved for any matching of $S$. The constrained alignment problem does not require  the first assumption. Removing bad edges as well as isolated vertices in $S$ can be performed in polynomial time and leads to an equivalent instance. So, we can assume that there is neither bad edge nor isolated vertex in $S$.} 
 
 \mar{Note finally that any $(m_1,m_2)$-instance of the constrained alignment problem (with $m_2>0$) can be transformed into an  instance of MAX($m_1+1, m_2$) with the same optimal value by adding to $V_2$ a set $V_I$ of $|V_1|$ independent vertices and linking, in $S$, every vertex in $V_1$ to its copy in $V_I$. This transformation does not modify $m_2$. In addition, note that, with the restriction that $S$ has no isolated vertex, the alignment problem with $m_2=1$ is equivalent to {MAX}$(\mu_{G_1},1)$ problem and if there is no bad edge, then all instances are trim instances for the latter problem. Indeed, if all vertices of $V_1$  have a degree at least~1 in $S$ and if vertices in $V_2$ have the degree~1 $S$, then all maximal matchings of the graph $S$ saturate $V_1$. As a consequence, all known results  for {MAX}$(\mu_{G_1},1)$ also hold for the alignment problem with $m_2=1$.}

\subsection{Conflict graph}\label{subsec:conflict}
 
 \subsubsection{The notion of $c_4$s and their conflicting configurations}
\ces{For the following,  we will call $c_4$ some specific 4-cycles   $abcd$, where 
$ab\in E_1$, $cd \in E_2$ and $ad,bc \in E_S$. These are partial induced $C_4$'s of the graph $(V_1\cup V_2, E_1\cup E_2\cup E_S)$, obtained as the union of $G_1, G_2$ and $S$, 
for the instance $\prec G_1, G_2, S \succ$.} 
\mar{Throughout the paper, we adopt the following notations to avoid any confusion between the different graphs we will refer to. When referring to $c_4$s, we will use simple letters from $a$ to $w$ (without indexes) to denote vertices of $V_1\cup V_2$. A $c_4$ is then denoted as a list of four vertices, where the two first ones are in $V_1$ and the two last are in $V_2$. Letters $x,y,z$ (sometimes with indexes) will denote vertices of the conflict graph defined below.}

We say that
two $c_4$s  \emph{ conflict}, if at least two of their similarity edges 
are adjacent \mar{but distinct} (then, they cannot coexist in any matching of $S$).
Let $efgh$  be a $c_4$ conflicting with the $c_4$ $abcd$, where $ef\in E_1$, $gh\in E_2$, and $eh, fg\in E_S$. 
In the case $m_2=1$, we can identify five generic configurations corresponding to the relative position of $efgh$ 
with respect to $abcd$. These possible configurations are shown in Figure~\ref{twoc4}; note that if $e,f\in\{a,b\}$ or $g,h\in\{cd\}$, then only the label in $\{a,b,c,d\}$ is represented. In $Conf_{1a}$, we have $a=f$, and the rest of the vertices are all distinct; in this case, we say that it is a $Conf_{1a}$ conflict. Analogously, in 
$Conf_{1b}$, we have $b=e$, and the rest of the vertices are all distinct. In $Conf_{2}$, we have
$a=e, b=f$, and the rest of the vertices are all distinct. In $Conf_{3a}$, we have $a=e, b=f,c=g$, and 
the rest of the vertices $d,h$ are distinct. Analogously, in $Conf_{3b}$, we have $a=e, b=f, d=h$, and the rest of the vertices 
$c, g$ are distinct. So, the number in the name of the conflicting configuration represents the number of vertices the two $c_4$s have in common. Similar to $Conf_{1a}$ conflict, we will refer to $Conf_{1b}$, $Conf_{2}$, $Conf_{3a}$ or $Conf_{1b}$ conflicts. 

\begin{figure}[t]	   
\begin{center}	   
\includegraphics[width=12cm]{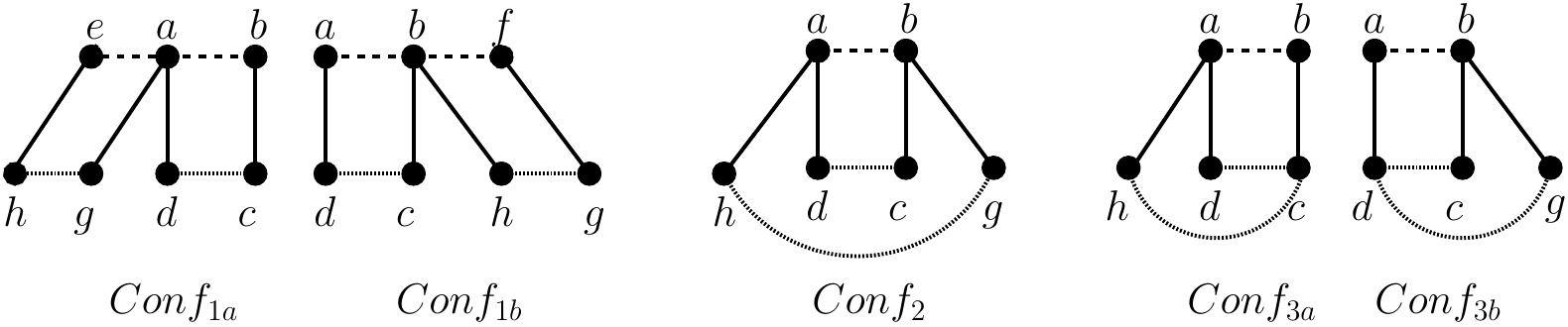} 
\caption{\sf \ces{Given two conflicting $c_4$s, $abcd$ and $efgh$, all possible conflicting configurations with respect to $abcd$, when $m_2=1$. 
For each configuration, the vertices at the top are $V_1$ vertices and the vertices at the bottom are $V_2$ vertices. 
}}
\label{twoc4}	   
\end{center}	   
\vspace*{-.4cm}	   
\end{figure}


For larger $m_2$, one can also observe all symmetric conflicting configurations obtained by exchanging $V_1$ and $V_2$ with similarity edges adjacent on $V_2$ vertices plus one configuration with two similarity edges adjacent on a $V_1$ vertex and two adjacent on a $V_2$ vertex.

\subsubsection{The conflict graph and its independent sets}

\mar{With} a given instance $\prec G_1,G_2,S \succ$, we 
\mar{associate} a  \emph{ conflict graph}, \mar{$\mathcal{C}=\left(V_{\mathcal{C}}, E_{\mathcal{C}}\right)$}, as follows. 
 For each $c_4$, create a vertex in \mar{$V_{\mathcal{C}}$} and for each pair of conflicting $c_4$s, create an edge between their respective vertices in \mar{$E_{\mathcal{C}}$}. 

\mar{We will denote by $\gamma$ the one-to-one correspondence mapping vertices of the conflict graph $\mathcal{C}$ to $c_4$s in 
$(V_1\cup V_2, E_1\cup E_2\cup E_S)$.
Thus, 
for any vertex $x\in V_{\mathcal{C}}$ of the conflict graph, 
$\gamma(x)$ is} the corresponding $c_4$; for instance, if the related $c_4$ is $abcd$ with $a,b \in V_1, ab\in E_1$ and $c,d\in V_2, cd\in E_2$, we will write $\gamma(x)=abcd$. We 
call $\gamma(x)$ ``the $c_4$ \mar{associated with} $x$''. \mar{In Theorem~\ref{Th:F8}, we will need the notation $\gamma(x)\cap \{a,b\}$ to denote the set of vertices in $\{a,b\}$ and visited by the $c_4$ $\gamma(x)$.}  



With this construction of the conflict graph, the constrained alignment problem  reduces to the maximum independent set problem \mar{as stated in the following proposition. This will be illustrated in the example detailed in Paragraph~\ref{subsub:figure}.}


\begin{proposition}\label{prop:reduction_alpha}\mbox{}\\
{\bf{(i)}} There is a one-to-one correspondence (bijective mapping) between independent sets in the conflict graph and minimal alignments in the instance $\prec G_1,G_2,S \succ$. An independent set of $p$ vertices maps to an alignment that conserves $p$ edges.\\
  {\bf{(ii)}} A maximum independent set of $\mathcal{C}$ maps to to an optimal alignment for $\prec G_1,G_2,S \succ$.\\
   {\bf{(iii)}} The maximum possible number of conserved edges is $\alpha(\mathcal{C})$.
\end{proposition}
\begin{proof}
\textbf{(i)}
 Let $\{x_1, \ldots, x_p\}$ be an independent  set in the conflict graph $\mathcal{C}$; by definition of the conflict graph, the $c_4$s $\gamma(x_i), i=1, \ldots, p$ are pairwise not conflicting in the graph $(V_1\cup V_2, E_1\cup E_2\cup E_S)$ and consequently their similarity edges constitute a legal alignment $A$. An edge $ab\in E_1$ is conserved for this alignment if and only if there are two edges $ad, bc$ in $A$ and $cd\in E_2$; in this case $abcd=\gamma(x_i)$ for some $i\in\{1, \ldots, p\}$.  Since two distinct non conflicting $c_4$s cannot share an edge of $G_1$ (neither of $G_2$), exactly $p$ edges of $G_1$ are conserved by this alignment. This also implies that the alignment $A$ is minimal.
 
 Conversely, for any minimal legal alignment that conserves $p$ edges of $G_1$, the conserved edges are in one-to-one correspondence with non-conflicting $c_4$s in the graph $(V_1\cup V_2, E_1\cup E_2\cup E_S)$. Through $\gamma^{-1}$, these $c_4$s correspond to an independent set $\{x_1, \ldots, x_p\}$ in $\mathcal{C}$.
 
 \textbf{(ii)} Since the one-to-one correspondence transforms an independent of cardinality $p$ set into  an alignment conserving $p$ edges, a maximum independent set maps to an alignment maximising the number of conserved edges.

 \textbf{(iii)} It follows immediately that the maximum possible number of conserved edges is $\alpha(\mathcal{C})$.
 
\end{proof}

\begin{corollary}\label{cor:reduction}
\mar{Any polynomial approximation algorithm for the maximum independent set in a graph $G$  guaranteeing the ratio $\rho(\mathcal{G})$  can be turned into a  polynomial approximation algorithm for the constrained alignment problem guaranteeing the ratio $\rho(\mathcal{\mathcal{C}})$, where $\mathcal{C}$ is the conflict graph associated with the instance $\prec G_1,G_2,S \succ$.}
\end{corollary}

\begin{proof}
\mar{The conflict graph as well as the mapping $\gamma$  can be computed in polynomial time with respect to the size $|V_1|+|V_2|$ of the instance  $\prec G_1,G_2,S \succ$ since it only requires identifying all $c_4$s and testing the compatibility of every two $c_4$s. The conflict graph is of polynomial size (details about its size are given in Subsection~\ref{sec:anym1m2}) and it follows immediately from the proof of Proposition~\ref{prop:reduction_alpha}-(i) that, given an independent set of size $p$ in $\mathcal{C}$, computing the corresponding minimal alignment that conserves $p$ edges can be done in polynomial. We conclude by using the fact that the maximum possible number of conserved edges is $\alpha(\mathcal{C})$.}
\end{proof}

\mar{Approximation ratios for the maximum independent set problem are usually expressed as functions of the number of vertices and/or maximum degree of the graph instance.  To derive an approximation ratio for the constrained alignment expressed as a function of the instance $\prec G_1,G_2,S \succ$ will require evaluating the main parameters of the conflict graph. This is the purpose of the Subsection~\ref{sec:anym1m2}.}

\mar{\begin{remark}\label{rem:noninjective}
Several minimal alignments (thus, several independent sets of the conflict graph) may correspond to the same set of conserved edges.
\end{remark}}

 \mar{Consider for instance as the graph $G_1$ a path $abc$ of length 2 and  as the graph $G_2$ a path $def$. If similarity edges are $ad, be, cf, af$  and $cd$, then, the two minimal alignments $\{ad, be, cf\}$ and $\{af, be, cd\}$ conserve the same edges $ab$ and $bc$ of $G_1$. We give in paragraph~\ref{subsub:figure} another possible situation, where two different alignments correspond to the same conserved edges in $G_1$ but not in $G_2$.}

\subsubsection{The underlying graph}

\mar{A direct consequence of Proposition~\ref{prop:reduction_alpha} is that removing from the instance  $\prec G_1,G_2,S \succ$ all $G_1$-edges, $G_2$-edges or similarity edges that do not belong to any $c_4$ does not change the problem in the sense that minimal alignments remain the same. For this reason, we  
 consider the graph $\mathcal{C}_U=(V_U,E_U)$ 
obtained from the union of $G_1, G_2$, and $S$ by excluding all the vertices and edges that are not part of 
any $c_4$s. In particular, this includes removing all bad edges~(\citet{Fertin200990}) of $G_1$ and $G_2$. We call $\mathcal{C}_U$ the {\em underlying graph} associated with the instance $\prec G_1,G_2,S \succ$.  It can be seen  as a simplified equivalent instance and consequently, we can always assume that we work on $\mathcal{C}_U$ instead of $(V_1\cup V_2, E_1\cup E_2\cup E_S)$ or, equivalently, that each edge in $E_1\cup E_2\cup E_S$ belongs to at least one $c_4$. In particular, in all our results, $m_i$ can be seen as the maximum number of similarity edges in $E_U$ incident to vertices of $V_i\cap V_U$.}   

\subsubsection{An example}\label{subsub:figure}

\mar{Figure~\ref{sample} gives an example that illustrates the notions of conflict graph, of underlying graph, the function $\gamma$ and the correspondence between minimal alignments in the original instance and independent sets in the conflict graph.  The left chart represents the instance $I=\prec G_1, G_2,S \succ$ and the related underlying graph $\mathcal{C}_U$. $G_1$ is represented on the top, with vertices $V_1=\{a,b,c,d,e\}$ and dashed edges and $G_2$ on the bottom  with vertices $V_2=\{f,g,h,i,j\}$ and dotted edges.  Blue edges/vertices correspond to edges/vertices in $(V_1\cup V_2, E_1\cup E_2\cup E_S)$ that are not part of the underlying graph. So, the underlying graph $\mathcal{C}_U$ appears in black color. In the original instance $m_1=3$ but, in the equivalent simplified instance defined by $\mathcal{C}_U$, it becomes 2.}

\mar{The list of $c_4$s and the related function $\gamma$
are represented in the middle part of the figure.
Note that $adcb$ or $bhcg$ are 4-cycles in $\mathcal{C}_U$ but not $c_4$s. }

\mar{Finally, the related conflict graph is represented on the right hand side. 
This instance has four different optimal solutions corresponding to the minimal alignments $\{ag,bh,di\}, \{bh,cg,di\}, \{bg,ch,di\}$ and $\{ag,di,ch\}$. They correspond respectively to the independent sets $\{x_1,x_4\}, \{x_2,x_6\}, \{x_3,x_5\}$ and $\{x_4,x_5\}$ in the conflict graph. Each optimal solution corresponds to two conserved edges in $E_1$: $\{ab,ad\}$, $\{bc,cd\}, \{bc,cd\}$ and $\{ad,cd\}$, respectively. In this example, these conserved edges correspond to an induced $P_3$ in the graph $G_1$ but, in the graph $G_2$, the related conserved edges which are respectively  $\{gh,gi\}, \{gh,hi\}, \{gh,gi\}$ and $\{gi,hi\}$, are not induced subgraphs of $G_2$ but only partial induced subgraphs. Note finally that the two alignments $\{bh,cg,di\}$ and $\{bg,ch,di\}$ correspond to the same conserved edges in $G_1$ but not in $G_2$. This is another illustration of Remark~\ref{rem:noninjective}.}

In what follows we provide several graph-theoretic properties of conflict graphs
arising from possible constrained alignment instances under various restrictions. 
Such properties are then employed in applying relevant independent set 
results. 

Throughout the paper we will assume $|V_1|\geq 2$ and $|V_2|\geq 2$ since, in the opposite case, the conflict graph is empty and the maximum alignment problem would be trivial \mar{(the only minimal alignment is empty)}. 
For a vertex $x\in V_i$ of $G_i$, $i=1,2$, we will denote by $d_i(x)$ its degree in $G_i$.

\begin{figure}[t]	   
\begin{center}	   
\includegraphics[width=12cm]{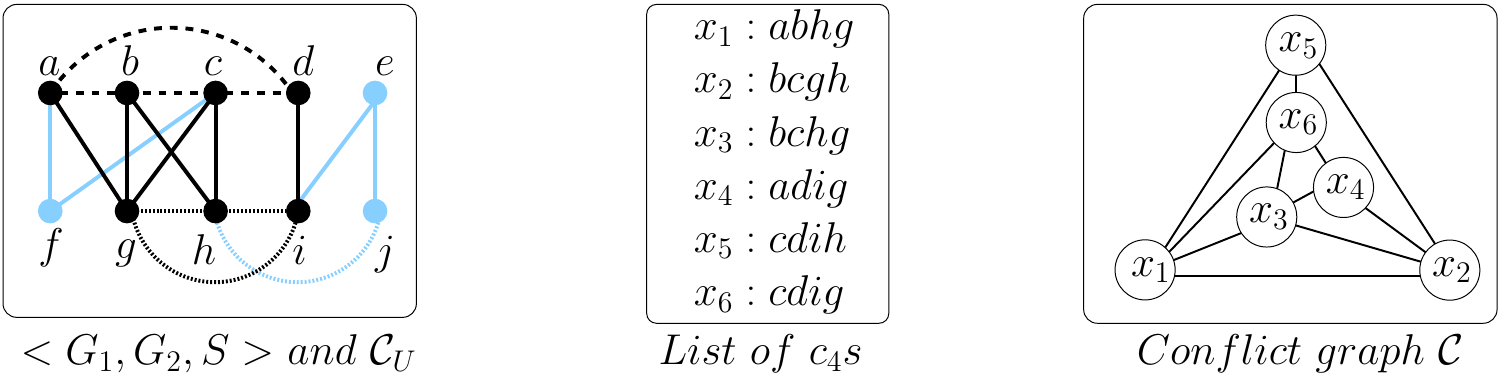} 
\caption{\sf \mar{An instance $\prec G_1,G_2,S\succ$ with the underlying graph $\mathcal{C}_U$ and the conflict graph $\mathcal{C}$. $V_1=\{a,b,c,d,e\}$ and $V_2=\{f,g,h,i,j\}$. In the left graph, dashed lines correspond to edges in $E_1$ while dotted lines correspond to edges in $E_2$. Blue edges and vertices (left graph) are edges and vertices in $(V_1\cup V_2, E_1\cup E_2\cup E_S)$ that are not part of the underlying graph and can be ignored. The list of $c_4$s also defines the function $\gamma$. }
} 
\label{sample}	   
\end{center}	   
\vspace*{-.4cm}	   
\end{figure}

\subsection{General properties of the conflict graph and applications}\label{sec:anym1m2}
In this subsection we first investigate the first basic properties of the conflict graph and deduce first approximation results using some standard results on the maximum independent set problem. For an instance $\prec G_1,G_2,S \succ$, we denote by $\mathcal{C}=(V_\mathcal{C},E_\mathcal{C})$ the related conflict graph.

\begin{lemma}
\label{vertexlemma}
Given an instance $\prec G_1,G_2,S \succ$ with conflict graph $\mathcal{C}$, the number $|V_\mathcal{C}|$ of vertices of $\mathcal{C}$ satisfies:\\ 
{\centerline{$|V_\mathcal{C}|\leq \min \left({m_1}^2|E_1|,{m_2}^2|E_2|, \frac{1}{2}m_1m_2|V_1|\Delta_2,\frac{1}{2}m_1m_2|V_2|\Delta_1\right)$.}}
\end{lemma}
\begin{proof}
Consider a similarity edge $xy\in E_S$, $x\in V_1, y\in V_2$.  
The edge $xy$ can belong to at most $\min(m_1d_1(x), m_2d_2(y))$ different $c_4$s. Consequently the number of possible  $c_4$s satisfies: 
$$|V_\mathcal{C}|\leq\frac{1}{2}\sum_{xy\in E_S}\min(m_1d_1(x), m_2d_2(y)).$$
Since $x$ has at most $m_1$ incident edges in $S$ and $d_2(y)\leq \Delta_2$ we deduce:  
$$|V_\mathcal{C}|\leq \frac{m_1}{2}\sum_{x\in V_1}\min(m_1d_1(x), m_2\Delta_2)\leq \min({m_1}^2|E_1|, \frac{1}{2}m_2m_1 |V_1|\Delta_2).$$ 

Similarly we have:
$$|V_\mathcal{C}|\leq \min(m_2|E_2|, \frac{1}{2}m_2m_1 |V_2|\Delta_1),$$ 
which concludes the proof.
\end{proof}

Given an independent set in $\mathcal{C}$, \mar{Proposition~\ref{prop:reduction_alpha} states that} all similarity edges involved in the related $c_4$s constitute a matching. Consequently, 

the optimal value $\alpha(\mathcal{C})$ can be bounded using Lemma~\ref{vertexlemma} with $m_1=1$ and $m_2=1$. This leads immediately to the following bound:

\begin{corollary}\label{cor:alpha}
Given an instance $\prec G_1,G_2,S \succ$ with conflict graph $\mathcal{C}$, the independence number of $\mathcal{C}$  satisfies:\\ 
{\centerline{$\alpha(\mathcal{C})\leq \min \left(|E_1|,|E_2|, \frac{1}{2}|V_1|\Delta_2,\frac{1}{2}|V_2|\Delta_1\right)$.}}
\end{corollary}

The following lemma generalises the 
bound for degrees provided in~\citet{Fertin200990} for the case where $m_2=1$.

\begin{lemma}
\label{deglemma}
Given an instance $\prec G_1,G_2,S \succ$ with conflict graph $\mathcal{C}$, let $\gamma(x)=abcd$ be a $c_4$ corresponding to a vertex $x$ in  $\mathcal{C}$, then the degrees in $\mathcal{C}$ satisfy:

\noindent
$ \begin{array}{ll}
 {\rm\textbf{(i)}} &
d_{\mathcal{C}}(x) \leq  m_1(m_1-1)(d_1(a)+d_1(b))-(m_1-1)^2 + m_2(m_2-1)(d_2(c)+d_2(d))-(m_2-1)^2;\\
 {\rm\textbf{(ii)}} & \Delta(\mathcal{C})\leq  2\Delta_1{m_1}^2+2\Delta_2{m_2}^2-2\Delta_1m_1-2\Delta_2m_2-{m_1}^2-{m_2}^2+2m_1+2m_2-2.
 \end{array} $
\end{lemma}
\begin{proof} 
\textbf{(i)}
Denote the set of 
$c_4$s in $\mathcal{C}$ 
conflicting with $\gamma(x)$ 
with $S_1\cup S_2$, where $S_1$ is the set of 
$c_4$s 
in conflict with $\gamma(x)$ 
that include $ad$ or $bc$, and $S_2$ consists of all other $c_4$s conflicting with $\gamma(x)$.  
It is clear that, if a $c_4$ from $S_1$ shares the edge $ad\ (bc)$ with $\gamma(x)$, it must also include either $b\ (a)$ or $c\ (d)$ in order to create a conflict with $\gamma(x)$. In any case, since the total number of valid similarity edges (edges that can create the conflict with $\gamma(x)$) incident to $b$ and $c$ ($a$ and $d$) is bounded by $m_1+m_2-2$, this implies that $|S_1|$ is upper-bounded by $2m_1+2m_2-4$. 
For the second set $S_2$, we first note that a pair of similarity edges can create only one $c_4$. 
This implies that any edge in $G_1$
different from $ab$
can be part of at most ${m_1}^2-m_1$ different $c_4$s in $S_2$ and any edge in $G_2$ different from $cd$ can be part of at most ${m_2}^2-m_2$
different $c_4$s in $S_2$.
Since the number of $G_1$ edges incident to $a$ or $b$, and different from $ab$ is 
 $d_1(a)+d_1(b)-2$, and respectively the number of $G_2$ edges incident to $c$ or $d$, and different from $cd$ is 
at most $d_2(c)+d_2(d)-2$, the number of $c_4$s in $S_2$ that do not include $ab$ or $cd$
is bounded by $(d_1(a)+d_1(b)-2)({m_1}^2-m_1)+(d_2(c)+d_2(d)-2)({m_2}^2-m_2)$.
The edges $ab$ and $cd$ themselves can be part of at most $(m_1-1)^2$ and $(m_2-1)^2$ different $c_4$s in $S_2$ respectively, which concludes the proof of \textbf{(i)}. \textbf{(ii)} is immediately deduced since  $d_1(a), d_1(b) \leq \Delta_1$ and $d_2(c), d_2(d) \leq \Delta_2$.
\end{proof}

 \mar{When evaluating the number of edges of the conflict graph, the first Zagreb index of the graphs $G_1$ and $G_2$ appear naturally, as stated in the following lemma. Note that if $m_1=1$ (resp., $m_2=1$), then the bound only depends on $G_2$ (resp., $G_1$).} 

\begin{lemma}
\label{edgelemma}
Given an instance $\prec G_1,G_2,S \succ$ with conflict graph $\mathcal{C}$, the number $|E_\mathcal{C}|$ of edges of $\mathcal{C}$ is bounded by: 
$$|E_\mathcal{C}|\leq  \frac{1}{2}\left ({m_1}^2(m_1-1)\left (m_1M_1(G_1)-(m_1-1)|E_1|\right )+ {m_2}^2(m_2-1)\left (m_2M_1(G_2)-(m_2-1)|E_2|\right )\right ).$$
\end{lemma}
\begin{proof}
We have $|E_\mathcal{C}|=\frac{1}{2}\sum_{x\in V_{\mathcal{C}}}d_{\mathcal{C}}(x)$.
Using Lemma~\ref{deglemma} and the fact that $ab$ (resp., $cd$) participates to at most ${m_1}^2$ (resp., ${m_2}^2$) $c_4$s we get:
\begin{equation}\label{eq:edges}
\begin{array}{rccl}
2|E_\mathcal{C}|&\leq & &
{m_1}^2(m_1-1)\sum\limits_{ab\in E_1}[m_1(d_1(a)+d_1(b))-(m_1-1)]\\
&&+& {m_2}^2(m_2-1)\sum\limits_{cd\in E_2}[m_2(d_2(c)+d_2(b))-(m_2-1)]\\
&\leq && m_1^3(m_1-1)\sum\limits_{ab\in E_1}(d_1(a)+d_1(b))-{m_1}^2(m_1-1)^2|E_1|\\ 
&&+& m_2^3(m_2-1)\sum\limits_{cd\in E_2}(d_2(c)+d_2(d))-{m_2}^2(m_2-1)^2|E_2|
\end{array}
\end{equation}
We conclude by noting that  $\sum_{ab\in E_1}(d_1(a)+d_1(b))=M_1(G_1)$  and similarly for $cd$ in the graph $G_2$. \end{proof}

\mar{If we want a bound for $|E_\mathcal{C}|$ only dependent on the degree, number of vertices and edges of $G_1, G_2$, then  several upper bounds exist for the first Zagreb index.  We mention here two of these bounds.}

\begin{theorem}\label{th:zagreb-bound} Given a connected graph $G=(V,E)$ with maximum degree $\Delta$ and minimum degree~$\delta$,\\
\textbf{ (i)}~(\citet{Zagreb-Liu}) $M_1(G)\leq \frac{|E|^2(\Delta+\delta)^2}{n\Delta \delta}$;\\
\textbf{ (ii)}~(\citet{Zagreb-Tabar})
$M_1(G)\leq 4\frac{|E|^2}{|V|}+ \frac{|V|}{4}(\Delta-\delta)^2$.
\end{theorem}

Note that the bound $M_1(G)\leq 2\Delta|E|$ is trivial for all graph $G=(V,E)$ and with maximum degree $\Delta$. This bound meets the two bounds in Theorem~\ref{th:zagreb-bound} for regular graphs. In Subsection~\ref{subsec:acyclic}, we will consider the class of acyclic graphs. For this class ($\delta=1$ and $|E|\leq |V|$), the bound \textbf{ (i)} immediately gives $M_1(G)\leq |E|\frac{(\Delta+1)^2}{\Delta} \leq |E|(\Delta +3)$, thus twice better than the trivial bound. Note also that in the case where one of these graphs has much less edges,  $|E_1|\mar{\in} o(|E_2|)$ or $|E_2|\mar{\in} o(|E_1|)$, then a direct application of Lemma~\ref{vertexlemma}, using  $|E_\mathcal{C}|\leq |V_\mathcal{C}|^2$, can give better bounds. 

\mar{Lemma~\ref{edgelemma}  and Theorem~\ref{th:zagreb-bound} will be used in Subsection~\ref{subsec:acyclic}. Below we provide direct consequences of Lemmas~\ref{vertexlemma} and~\ref{deglemma} leading to the design of polynomial-time approximation algorithms for the constrained alignment problem.}


The best known approximation ratios guaranteed by polynomial algorithms   for the maximum independent set problem are  $O(\Delta\log \log\Delta/\log \Delta)$~(\citet{approx-stable}) and $O(n/\log^2 n)$~(\citet{Boppana92}), where $\Delta$ and $n$ denote respectively the maximum degree and the number of vertices of the input graph. 
Combining it with
Lemmas~\ref{vertexlemma} and~\ref{deglemma}   leads to the following approximation for the general setting.

\begin{proposition}\label{prop:generalapprox}\mbox{}\\
\textbf{ (i)} For any positive constant $m_1$, $m_2$, 
the constrained alignment problem can be approximated in polynomial time with an approximation ratio of
$O((\Delta_1+\Delta_2)\log \log(\Delta_1+\Delta_2)/\log(\Delta_1+\Delta_2))$;\\
\textbf{ (ii)} If only $m_2$ (resp. $m_1$) is constant, then the constrained alignment problem can be approximated in polynomial time with an approximation ratio of
$O(|E_1|/\log^2|E_1|)$ (resp. $O(|E_2|/\log^2|E_2|)$).
\end{proposition}

It is known that using bounded search techniques~(\citet{ParameterizedComplexity}),
one can find an independent set of size $k$ in a graph $G$ in $O(n(\Delta(G)+1)^k)$ time, or return that no such subset exists.
In~\citet{Fertin200990}, 
this result  is used to show that the constrained alignment problem 
is fixed-parameter tractable for bounded degree graphs with $m_2=1$. Lemma~\ref{deglemma} immediately provides a generalisation for the general setting.

\begin{proposition}\label{prop:parameter1}
Provided that $G_1$ and $G_2$ are bounded degree graphs, for any positive constants $m_1$, $m_2$, 
the constrained alignment problem is fixed-parameter tractable for parameter $k$ and 
solvable in\\ $O(min(|E_1|,|E_2|)(D+1)^k)$ time, where $k$ is the number of final conserved edges and $D=O(\Delta_1+\Delta_2)$.
\end{proposition}

In what follows we consider the case $m_2=1$ \-- which, \mar{to our knowledge}, is the most studied case \-- and investigate specific properties of the conflict graph. \mar{This case, by itself already very hard, simplifies the possible conflicts and then perfectly illustrates the use of the conflict graph. As explained in the conclusion, the following results motivate the further study of conflict graphs and their independent sets for a more general set-up. }

\section{The case $m_2=1$}\label{sec:m2is1}

The case with $m_2=1$ is the main case considered in~\citet{Fertin200990}.  We remind that, in this case,  the possible conflicting configurations are listed in Figure~\ref{twoc4}.  Some improved results deal with the particular case $m_1=2$. 
It is 
known that the problem is APX-hard even for the case where
$m_1=2$ and both $G_1, G_2$ are bipartite~(\citet{Fertin200990}).

\subsection{Structure of $\mathcal{C}$ and approximation}\label{subsec:struct-approx}

In this subsection we present graph theoretic properties of conflict graphs in terms of forbidden subgraphs when $m_1=2$. 
In addition to providing valuable information regarding 
structural properties of conflict graphs, it has also algorithmic applications, mainly approximation results. 

Note first that, if $m_2=1$, Lemma~\ref{deglemma} states that the maximum degree of the conflict graph is at most $2({m_1}^2-m_1)\Delta_1 +m_1(2-m_1)-1$ and consequently Proposition~\ref{prop:generalapprox} can be immediately replaced by:

\begin{proposition}\label{cor:approxD1}
For $m_2=1$ and any positive constant $m_1$, the constrained alignment problem can be approximated in polynomial time with an approximation ratio of
$O(\Delta_1\log \log(\Delta_1)/\log(\Delta_1))$.
\end{proposition}
 
This approximation ratio in $o(\Delta_1)$ improves the result of
~\citet{Fertin200990} \-- $2\lceil3\Delta_1/5\rceil$
for even $\Delta_1$ and $2\lceil(3\Delta_1+2)/5\rceil$ for odd 
$\Delta_1$ \-- also obtained for $m_2=1$. We will give later another improvement in the case where $\Delta_2$ is less than this ratio. 

\mar{We first establish some  properties of conflict graphs when $m_2=1$ \-- Facts~\ref{onenode} and~\ref{twonodes}, Lemmas~\ref{lem_H} and~\ref{lem:5nodes} and Corollary~\ref{lem_P4} \--  that will be useful for the main structural and algorithmic results. Then, in paragraphs~\ref{subsub:wf} and~\ref{subsub:cc}, we derive structural results and their algorithmic consequences. }

\begin{fact}
\label{onenode}
Any pair of conflicting $c_4$s in $\mathcal{C}_U$ must share at least one vertex from $G_1$.
\end{fact}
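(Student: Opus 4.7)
The plan is to reduce the statement to an elementary case analysis on which endpoints of the two $c_4$s' similarity edges coincide, and then use the degree constraint $m_2=1$ to lift any shared $V_2$ vertex to a shared $V_1$ vertex.

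First I would unpack the definitions. Writing $c_4(u)=abcd$ with $ab\in E_1$, $cd\in E_2$, and similarity edges $ad,bc\in E(S)$, and writing the second four-cycle as $c_4(v)=a'b'c'd'$ with the analogous labelling, the two $c_4$s conflict iff at least one similarity edge of $u$ and one of $v$ cannot coexist in a matching of $S$. Equivalently, there exist edges $e\in\{ad,bc\}$ and $e'\in\{a'd',b'c'\}$ that share an endpoint. Since each similarity edge has one endpoint in $V_1$ and one in $V_2$, this shared endpoint lies either in $V_1$ or in $V_2$.

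Next I would dispatch the two cases. If the shared endpoint is in $V_1$, then $u$ and $v$ already share a $G_1$-vertex and we are done. If the shared endpoint is in $V_2$, say without loss of generality $d=d'$ (the other three subcases $d=c'$, $c=d'$, $c=c'$ are symmetric under swapping the roles of $a\leftrightarrow b$ or $a'\leftrightarrow b'$), then the hypothesis $m_2=1$ forces $d$ to have a unique incident similarity edge, so $ad=a'd'$ and in particular $a=a'\in V_1$. Thus $u$ and $v$ share the $G_1$-vertex $a$.

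There is no real obstacle here; the only thing to be careful about is enumerating all four ways two similarity edges from different $c_4$s can meet at a $V_2$ vertex and noting that in each subcase the $m_2=1$ constraint collapses the incident similarity edges and hence identifies their $V_1$ endpoints. This argument does not use any restriction on $m_1$, which is appropriate since the fact is invoked throughout the any-constant-$m_1$ section.
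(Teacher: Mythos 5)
Your argument is correct: a conflict means two distinct similarity edges share an endpoint, and since $m_2=1$ rules out a shared $V_2$ endpoint for distinct edges, the shared vertex (or the identified $V_1$ endpoints of the collapsed edges) must lie in $V_1$. The paper states this fact without proof, treating it as immediate, and your case analysis is exactly the justification it implicitly relies on.
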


\begin{fact}
\label{twonodes}
Any pair of distinct $c_4$s in $\mathcal{C}_U$ sharing two vertices from $G_1$ has a conflict.
\end{fact}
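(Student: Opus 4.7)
The plan is to unpack the definitions and observe that once two distinct $c_4$s share both of their $V_1$-vertices, their similarity edges are forced into a configuration that necessarily violates the matching condition on $S$. The argument is essentially a short case check rather than a structural induction, and I do not anticipate any real obstacle beyond writing down the forced shape of the two $c_4$s.

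First I would recall that in a $c_4(v)=abcd$ the two $V_1$-vertices are precisely the endpoints $a,b$ of the $E_1$-edge, while $c,d\in V_2$ are the endpoints of the $E_2$-edge. Hence if two distinct $c_4$s share the $V_1$-pair $\{a,b\}$, each of them must use $ab$ as its $E_1$-edge, and so the two $c_4$s must take the form $abcd$ and $abc'd'$ for some $c,d,c',d'\in V_2$ with $cd,c'd'\in E_2$. Their similarity edges are then $\{ad,bc\}$ for the first and $\{ad',bc'\}$ for the second.

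Next I would exploit distinctness: since the two $c_4$s are not identical, at least one of $d\neq d'$ or $c\neq c'$ must hold. If $d\neq d'$, then $ad$ and $ad'$ are two distinct similarity edges both incident on $a\in V_1$, so they cannot coexist in any matching of $S$. Symmetrically, if $c\neq c'$, then $bc$ and $bc'$ share the endpoint $b$ and cannot coexist. In either case the two $c_4$s contain a pair of similarity edges that cannot lie together in a legal alignment, which is precisely the definition of a conflict, so the fact follows. Note that $m_2=1$ is not even needed for this particular statement; the conflict arises purely from the $V_1$-side of the similarity edges.
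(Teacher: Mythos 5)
Your proof is correct; the paper states this fact without any proof, treating it as immediate from the definitions, and your case check (both $c_4$s are forced to use the $E_1$-edge $ab$, so distinctness yields two distinct similarity edges sharing an endpoint in $V_1$) is exactly the intended verification. Your observation that $m_2=1$ is not needed here is also accurate.
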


\begin{lemma}
\label{lem_H}
Given an instance $\prec G_1,G_2,S \succ$ with conflict graph $\mathcal{C}$, suppose $m_2=1$ and consider an induced subgraph $H$  of $\mathcal{C}$ such that $\overline{H}$ is connected and $H$ has an induced $P_3$.
Then the $c_4$s in $H$ cannot all share a vertex from $G_1$.
\end{lemma}

\begin{proof}
Let \mar{$x_1x_2x_3$} be an induced $P_3$ in $H$  and let $\gamma(x_1)=abcd$. 
Assume for the sake of contradiction that $a\in G_1$ is a vertex common to all the $c_4$s \mar{associated with vertices of} $H$. For every two vertices $y,z$ in $H$ not linked by an edge, $\gamma(y)$ and $\gamma(z)$ must share the similarity edge including $a$ to avoid any conflict. As a consequence and since  $\overline{H}$ is connected, all 
the $c_4$s \mar{associated with vertices of} $H$ must share the edge $ad$. 
This implies that  any conflict between any pair of these $c_4$s can only be either a
$Conf_{3a}$ or a $Conf_{3b}$ conflict, which further implies that all the $c_4$s $\gamma(x_i), 
\mar{i=1,2,3}$ include  $b$.
By Fact~\ref{twonodes} and since $x_1\neq x_3$, this implies a conflict between $\gamma(x_1)$ and $\gamma(x_3)$, a contradiction.
\end{proof}

For instance, a $P_4$ or $P_3+K_1$ \-- the independent union of a $P_3$ and an isolated vertex \-- clearly both satisfy the conditions on $H$: they both have an induced $P_3$ and moreover, $\overline{P_4}$ is a $P_4$ as well while $\overline{P_3+K_1}$ is a triangle with a pendent vertex, both connected. So, we immediately deduce:

\begin{corollary}
\label{lem_P4}
Given an instance $\prec G_1,G_2,S \succ$ with conflict graph $\mathcal{C}$, if $m_2=1$, the four $c_4$s of an induced $P_4$ or an induced $P_3+K_1$ of $\mathcal{C}$ 
cannot all share a vertex from $G_1$.
\end{corollary}

The following lemma will be useful for studying the structure of $\mathcal{C}$.

\begin{lemma}\label{lem:5nodes}
Given an instance $\prec G_1,G_2,S \succ$ with conflict graph $\mathcal{C}$, suppose $m_2=1$ and that  we have in $\mathcal{C}$ an induced $P_5$ $x_1x_2x_3x_4x_5$ as well as two vertices $y_1,y_2$ not linked to $x_i, i=2,3,4$ and an additional vertex  $x$ linked to the seven vertices $y_1,y_2$, $x_1,x_2,x_3,x_4, x_5$. Denote $\gamma(x)=abcd$.
Then if $\gamma(y_1)$ does not include $b$, neither does $\gamma(y_2)$. 
\end{lemma}

\begin{proof}
Since $y_1,y_2,x_1,x_2,x_3, x_4, x_5$ are all linked to $x$, Fact~\ref{onenode} ensures the related $c_4$s include  $a$ or $b$. 
Assume for the sake of contradiction that $\gamma(y_1)$ does not include $b$ while $\gamma(y_2)$  
\mar{does}. Since $\gamma(y_1)$ conflicts \mar{with} $\gamma(x)$, we have  $\gamma(y_1)=aklm$ with $k \in V_1 \setminus \{a,b\}$ and $m\neq d$. Let  $\gamma(y_2)=bpqr$, $r\neq c$. 
Since $m_2=1$, 
 $m, d,c,r$ are all paire wise distinct.

As mentioned above $\gamma(x_j), j=1,\ldots 5$ must include $a$ or $b$. Since $\gamma(x_j), j=2,3,4$ do not conflict with $\gamma(y_1)$ nor with $\gamma(y_2)$, if it includes $a$ it must include the edge $am$
and if it includes $b$ it must include the edge $br$. Moreover, none of them can include both $a$ and $b$. Indeed, in this case  
$\gamma(x_j)=abrm$ for some $j=2,3,4$ and since any $\gamma(x_{j'})$,  $j'\in \{2,3,4\}\setminus\{j\}$, can neither include an edge $am'$, $m'\neq m$ nor $br'$, $r'\neq r$, it cannot conflict with $\gamma(x_j)$, a contradiction.

On the other hand, since $\gamma(x_3)$ has a conflict with both $\gamma(x_2)$ and $\gamma(x_4)$ and since  $\gamma(x_2)$ and $\gamma(x_4)$ are not conflicting, there must be two similarity edges $uv$, $uv'$, $u\in V_1\setminus\{a,b\}$, $v,v'\in V_2, v\neq v'$, where $uv$ is an edge of $\gamma(x_3)$ and $uv'$ is an edge of both $\gamma(x_2)$ and $\gamma(x_4)$. Since $\gamma(x_2)\neq \gamma(x_4)$, one of them includes the edge $am$ and the other includes the edge $br$.

We consider below the possible cases that all lead to a contradiction.

 \emph{Case-1:}
Suppose $\gamma(x_2)=auv'm$ and $\gamma(x_4)=buv'r$, thus   
$\gamma(x_3)$ is either $auvm$ or $buvr$. 

  \emph{Case-1.1:} If $\gamma(x_3)=auvm$, then since $\gamma(x_1)$ conflicts with $\gamma(x_2)$ but not with $\gamma(x_3)$ it must include the edge $uv$ but in this case it would conflict with $\gamma(x_4)$.
 
  \emph{Case-1.2:} Similarly if $\gamma(x_3)=buvr$, then since $\gamma(x_5)$ conflicts with $\gamma(x_4)$ but not with $\gamma(x_3)$ it must include the edge $uv$ but in this case it would conflict with $\gamma(x_2)$.

  \emph{Case-2:}
Suppose now  $\gamma(x_2)=buv'r$ and $\gamma(x_4)=auv'm$, thus   
$\gamma(x_3)$ is either $auvm$ or $buvr$.
In both cases we get the same contradiction as in Case-1 exchanging the roles of $am$ and $br$. This concludes the proof.
\end{proof}

\subsubsection{Wheels and Fans}\label{subsub:wf}

\begin{theorem}
\label{lemW8} 
Given an instance $\prec G_1,G_2,S \succ$ with conflict graph $\mathcal{C}$,\\
\textbf{ (i)}
If $m_2=1$, 
$\mathcal{C}$ is $W_t$-free, for $t\geq 7$;\\
\textbf{ (ii)} If furthermore $m_1=2$, $\mathcal{C}$ is also $W_5$ and $W_6$-free. 
\end{theorem}

\begin{proof}
Assume for the sake of contradiction an induced $W_t$ exists with  $t\geq 5$ and 
let $x$ be the center vertex with $\gamma(x)=abcd$. Let $x_1x_2\ldots x_tx_1$
be the induced $C_t$ of the wheel $W_t$ in the conflict graph.
By Fact~\ref{onenode}
every $\mar{\gamma(x_i)}, 1\leq i\leq t$ must include at least one of 
$a$ or $b$. 
By Corollary~\ref{lem_P4} (the cycle $C_t$ has an induced $P_4$), it is not possible for all of 
\mar{these $c_4$s} to share 
$a$, nor can they all share $b$. This implies that there must exist 
a pair of conflicting $c_4$s, $\gamma(x_i), i=1, \ldots, t$, such that their corresponding vertices in $\mathcal{C}$ are neighbors in $C_t$, one including $a$ and the other including $b$ and one of them does not contain both $a$ and $b$. 
Without loss of generality, let the former be $\gamma(x_t)=aklm$  with $k \in V_1 \setminus \{a,b\}$ and the latter be $\gamma(x_{t-1})=bpqr$. 


\textbf{(i)}
Assume first $t\geq 7$. Then apply Lemma~\ref{lem:5nodes} with $y_1=x_t$ and $y_2=x_{t-1}$ gives a contradiction.

\textbf{(ii)} Now we show directly that there is also a contradiction if $m_1=2$ and $t=5,6$. 

We consider two cases $\gamma(x_{t-1})=abrd$,  and $\gamma(x_{t-1})=bkl'r$, $l'\neq l$ ensuring the conflict between $\gamma(x_{t-1})$ and $\gamma(x_{t})$. In both cases $r\neq c$ ensures the conflict between $\gamma(x_{t-1})$ and $\gamma(x)$.

 \emph{Case-1:} $\gamma(x_{t-1})=abrd$. Since $\gamma(x_2), \gamma(x_1)$ have no conflict with $\gamma(x_{t-1})$ but with $\gamma(x)$, they both include $br$ and not $am$. Moreover, since $\gamma(x_2)$ does not conflict  $\gamma(x_{t})$, it cannot include $a$ and thus $\gamma(x_2)=buvr, u\neq a$.  Since  $\gamma(x_1)$  conflicts with both $\gamma(x_t)$ and $\gamma(x_2)$ we have $u=k$, $v=l$ and $\gamma(x_1)=bkl'r, l'\neq l, \gamma(x_2)=bklr$. Then, since $\gamma(x_3)$ conflicts with $\gamma(x_2)$
but not with $\gamma(x_1)$, it must include the edge $kl'$. To conflict with $\gamma(x)$ it should include $am$ or $br$, a contradiction since $x_3\neq x_t, x_3\neq x_1$.

 \emph{Case-2:} $\gamma(x_{t-1})=bkl'r$, $l'\neq l$.

 Since $\gamma(x_{1})$ conflicts with $\gamma(x_{t})$ and with $\gamma(x)$ and since $x_1\neq x_{t-1}$, $\gamma(x_{1})$ cannot include $br$ and thus includes $am$ and  $\gamma(x_{1})=akl'm$.

$\gamma(x_{t-2})$ conflicts with $\gamma(x_{t-1})$ but not with $\gamma(x_{t})$ and includes  $am$ or $br$.
Since $x_{t-2}\neq x_t$ the only possibility is 
$\gamma(x_{t-2})=bklr$. 
Then, $\gamma(x_{2})$ conflicts with $\gamma(x_1)$ but not with $\gamma(x_{t})$ and includes  $am$ or $br$; the two only candidates are $aklm$ and $bklm$, both impossible since $x_2\neq x_t, x_2\neq x_{t-2}$ (note that $t-2\geq 3$). It concludes the proof.
\end{proof}

\begin{figure}[t]	   
\begin{center}	   
\includegraphics[width=12cm]{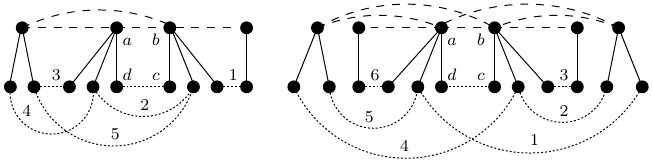} 
\caption{\sf Sample configurations for $\mathcal{C}_U$s inducing $W_5$ (left) and $W_6$ (right) in their respective 
conflict graphs
for the case where $m_1=3$.
The central vertices of the wheels in each case correspond to the $c_4$s  
indicated with $abcd$.
} 
\label{w7sample}	   
\end{center}	   
\end{figure}  

Note that for $m_1>2$, it is still possible to have a $W_5$ and 
$W_6$ in a conflict graph  as illustrated in Figure~\ref{w7sample}. Note also that $W_4$  and $w_3=K_4$ can still exist in $\mathcal{C}$ even if $m_1=2$. Figure~\ref{cor8fig} gives a sample construction with a $W_4$ while Figure~\ref{cliquesample} gives an example with a $K_4$. 
It means that, in terms of induced wheels, Theorem~\ref{lemW8}  leaves no gap.

  \begin{figure}[t]     
\begin{center}	   
\includegraphics[width=12cm]{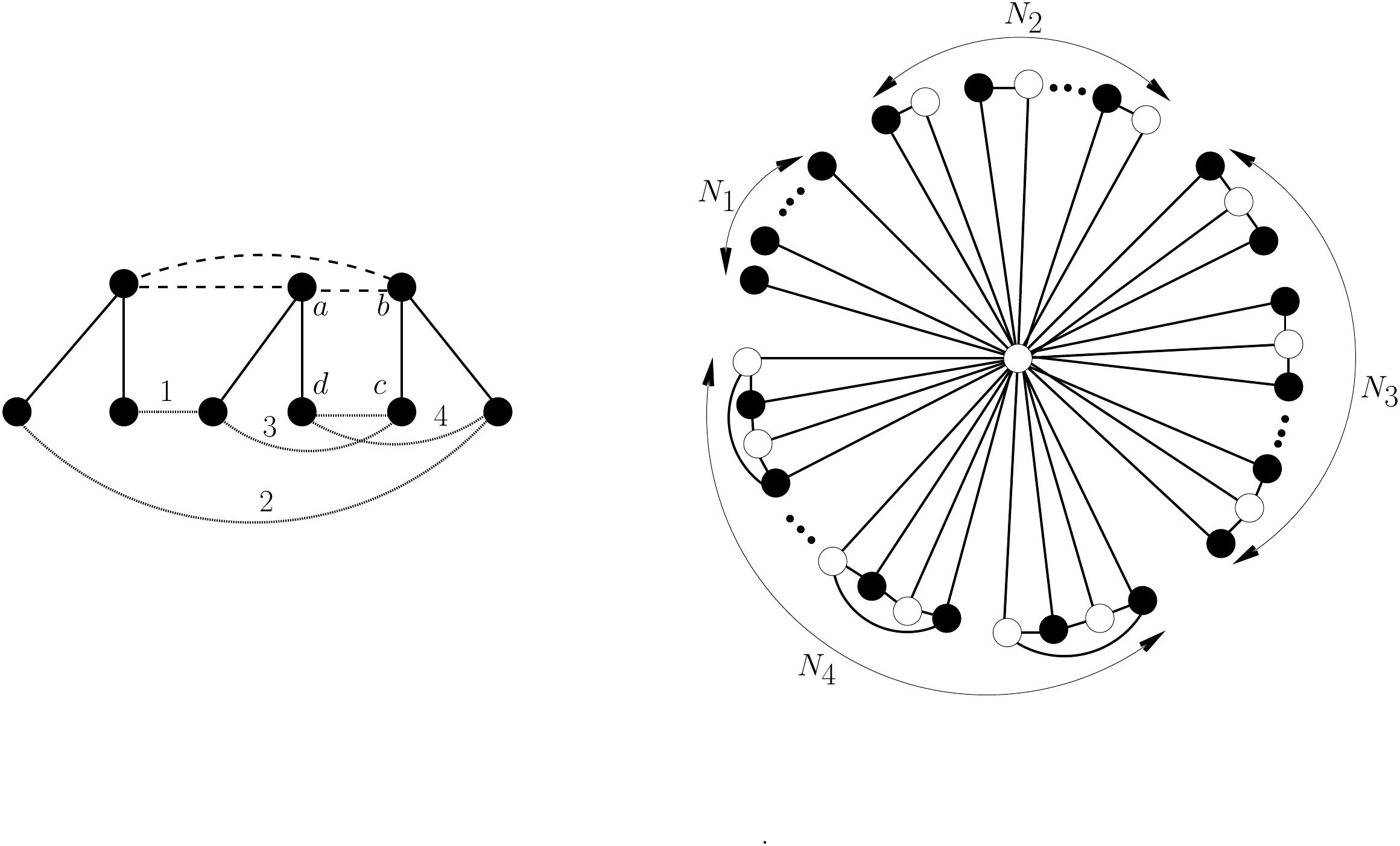} 
\caption{\sf \textbf{Left:} Sample construction for a $W_4$ in $\mathcal{C}$. The central vertex of the wheel corresponds to the $c_4$  $abcd$. The upper partition corresponds to vertices of $G_1$ and the lower partition to those of $G_2$. Similarity edges are drawn between the partitions. \textbf{ Right:} Depiction of the construction defined in Lemma~\ref{lem:F8}. The vertex $x$  is shown in the center and vertices in $S_x^1$ are shown at the peripheral. $N_t$ corresponds to all vertices in components of $\mathcal{C}[S_x^1]$ of size~$t$. The black vertices constitute a maximum independent set of $\mathcal{C}[S_x^1]$. 
} 
\label{cor8fig}	   
\end{center}	   
\end{figure}

\begin{figure}[t]	   
\begin{center}	   
\includegraphics[width=14cm]{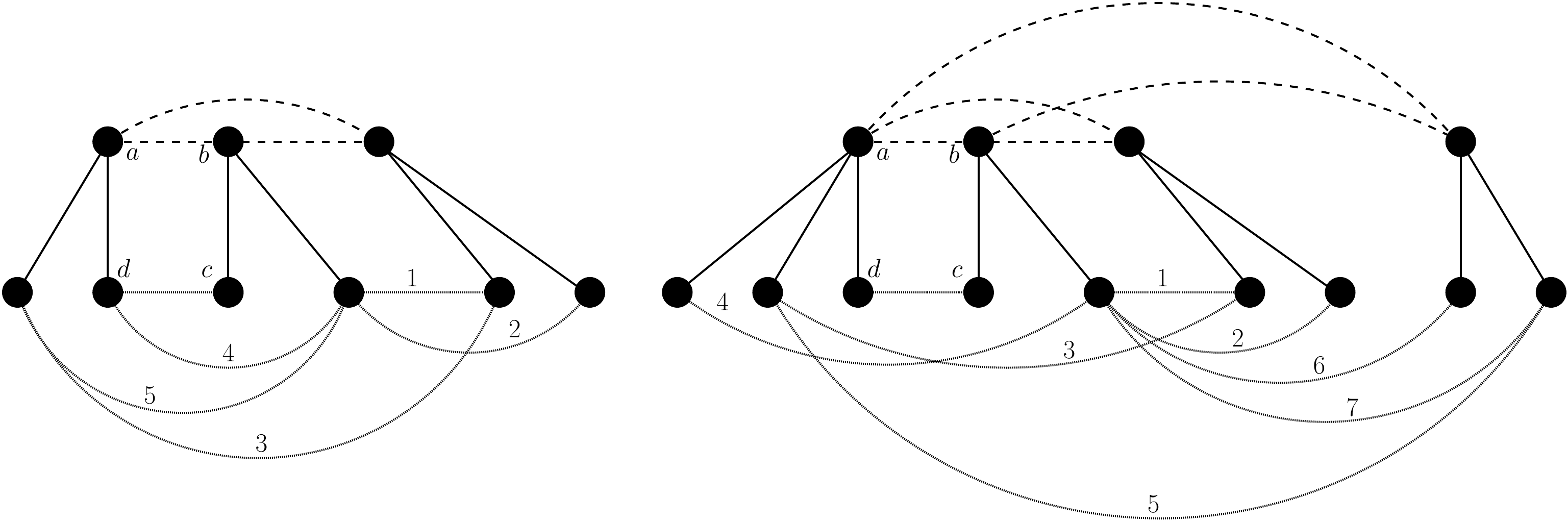} 
\caption{\sf \textbf{ Left:} Sample configuration for $\mathcal{C}_U$ inducing
$F_5$ for the case where $m_1=2$. 
\textbf{Right:} Sample configuration for $\mathcal{C}_U$ inducing $F_7$ 
for the case where $m_1=3$. In each case 
the central vertex corresponds to the $c_4$  
indicated with $abcd$. 
Each $G_2$ edge is marked with the related $c_4$  in $P_5=12345$ (left) or $P_7=1234567$ (right). 
} 
\label{f5and7sample}	   
\end{center}	   
\end{figure}

The following lemma gives an example how considering the different kind of conflicts, for $m_2=1$ and $m_1=2$, (see Figure~\ref{twoc4}) helps understanding the structure of the conflict graph. 

\begin{lemma}\label{lem:F8}
Given an instance $\prec G_1,G_2,S \succ$ with conflict graph $\mathcal{C}$, suppose $m_2=1$ and $m_1=2$ and consider a vertex $x$ in $\mathcal{C}$ and the set $S_x^1$ of $c_4$s that conflict $\gamma(x)$ with a $Conf_{1a}$ or $Conf_{1b}$ configuration. Then, $\mathcal{C}[S_x^1]$ is an independent collection of $C_4$s, $P_3$s, $P_2$s and isolated vertices.\end{lemma}

\begin{proof}
Since $m_1=2$, in $c_4\in S_x^1$, at most two $c_4$s   can conflict a fixed $c_4$ and consequently the graph $\mathcal{C}[S_x^1]$ has degree at most~2, which means it is an independent  collection of cycles and paths. For any $t\geq 1$, consider a connected component of $\mathcal{C}[S_x^1]$ of size~$t$. 

Assume we have $u_1,u_2,u_3$ in $\mathcal{C}[S_x^1]$ with edges $u_1u_2$ and $u_2u_3$. Since $m_1=2$ $\gamma(u_1),\gamma(u_2)$ and $\gamma(u_3)$  cannot all include $a$ and neither can they all include $b$.  Suppose without loss of generality that two of them include $a$ and one $b$ and in this case, the structure of conflicts $Conf_{1a}$ and $Conf_{1b}$ imposes that $\gamma(u_2)$ includes $a$, say $\gamma(u_2)=aklm$ with $k,l,m\notin \{b,c,d\}$. Suppose then without loss of generality that $\gamma(u_1)$ includes $a$ and $\gamma(u_3)$ includes $b$: $\gamma(u_1)=akl'm$, $l'\neq l$ and necessarily $\gamma(u_3)=brl'k$ to create a conflict with $\gamma(u_2)$. Moreover, since $m_2=1$, $r\notin \{c,d,m,l,l'\}$.

Note then that we cannot have any conflict between $\gamma(u_3)$ and $\gamma(u_1)$, which means that $\mathcal{C}[S_x^1]$ is triangle-free. Moreover suppose a fourth $c_4$, $\gamma(u_4)$ conflicting $\gamma(u_3)$ in $\mathcal{C}[S_x^1]$. It necessarily includes $kl$ and thus conflicts $\gamma(u_1)$, which means that $\mathcal{C}[S_x^1]$ is $P_4$-free, which completes the proof. Figure~\ref{cor8fig} (Right) describes the structure of $\mathcal{C}[S_x^1]$, where $N_t$, $t=1,2,3,4$, is the union of components of $\mathcal{C}[S_x^1]$ of size~$t$.
\end{proof}

\begin{corollary}\label{polalg1}
Given an instance $\prec G_1,G_2,S \succ$ with conflict graph $\mathcal{C}$, if $m_1=2$ and $m_2=1$, then for every $x\in V_{\mathcal{C}}$, removing at most two vertices to $\mathcal{C}_x$ makes it an independent collection of $C_4$s, $P_3$s, $P_2$s and isolated vertices. 
\end{corollary}
\begin{proof} 
If $m_1=2$, at most one $c_4$ conflicts $\gamma(x)$  with a  $Conf_{3a}$  configuration, and 
at most one with  $Conf_{3b}$ configuration. Let us remove these vertices.  There can be at most one $c_4$ conflicting $\gamma(x)$ with a  $Conf_{2}$ configuration and moreover such a $c_4$ necessarily \mar{corresponds to} an isolated vertex in $\mathcal{C}[S_x^1]$. Since all the other neighbors of $x$ correspond to   $Conf_{1a}$ or $Conf_{1b}$ configurations,  Lemma~\ref{lem:F8} immediately concludes the proof.   
\end{proof}

\begin{corollary}\label{rem:degree}
If $m_1=2$ and $m_2=1$ we are ensured to find in polynomial time a legal alignment with at least 
$(\Delta(\mathcal{C})-2)/2$ conserved edges.
\end{corollary}

\mar{\begin{proof}
It is an immediate consequence of Corollary~\ref{polalg1} applied to a vertex $x$ of maximum degree in $\mathcal{C}$. An exhaustive search or just the detection of $Conf_{3a}$  and 
$Conf_{3b}$ configurations involving $\gamma(x)$ allows to identify the vertices to be removed to make $\mathcal{C}_x$ an independent collection of $C_4$s, $P_3$s, $P_2$s and isolated vertices. Picking in this collection an independent set of two vertices in each   $C_4$s and $P_3$, one vertex in each $P_2$ and all the isolated vertices gives a independent set of size at least~$(\Delta(\mathcal{C})-2)/2$ in $\mathcal{C}$. Using Proposition~\ref{prop:reduction_alpha} and the fact that the function $\gamma$ can be computed in polynomial time (see Corollary~\ref{cor:reduction}) allows to conclude. 
\end{proof}}

The following result concerns the existence of induced fans $F_t$ in the conflict graph. Note that for $2\leq t\leq t'$, $F_t$ is an induced subgraph of $F_{t'}$ and consequently an $F_t$-free graph is also $F_{t'}$-free. 


\begin{theorem}
\label{Th:F8} 
 Given an instance $\prec G_1,G_2,S \succ$ with conflict graph $\mathcal{C}$ such that $m_2=1$, then:\\
\textbf{ (i)}
For $m_1\geq 3$, $\mathcal{C}$ is $F_8$-free;\\
\textbf{ (ii)} For $m_1=2$, $\mathcal{C}$ is $F_6$-free.
\end{theorem}

\begin{proof}
Consider an induced $F_t$ and let $\gamma(x)=abcd$ be the center vertex. 
 
 \textbf{(i)} Assume for the sake of contradiction that $t=8$ and
 denote by \mar{$z_1z_2 \cdots z_8$}
be the induced $P_8$ in the neighborhood of $x$.
By Fact~\ref{onenode}
every $c_4$ $\gamma(z_i), i=1, \ldots, 8$ must include at least one of 
$a$ or $b$ 
in $\mathcal{C}_U$. 

Suppose first $\gamma(z_1)\cap \{a,b\}=\gamma(z_8)\cap \{a,b\}$. Without loss of generality we assume they both include $b$ and either both include $a$ as well or none of them. Consider then the subgraph induced by $z_1,z_2,z_3,z_8$, inducing a $P_3+K_1$.
By Corollary~\ref{lem_P4}, the $c_4$s $\gamma(z_1),\gamma(z_2),\gamma(z_3)$ and $\gamma(z_8)$  cannot all include $b$ and let $i\in\{2,3\}$ such that   $\gamma(z_i)$ does not include $b$. Then, Lemma~\ref{lem:5nodes} with $y_1=z_1$ and $y_2=z_i$ and $z_4, \ldots z_8$ corresponding to $x_1, \ldots x_5$ leads to a contradiction. 

Suppose now $\gamma(z_1)\cap \{a,b\}\neq \gamma(z_8)\cap \{a,b\}$, then one only includes $b$ and we get a contradiction as well by applying Lemma~\ref{lem:5nodes} with $y_1=z_1$ and $y_2=z_8$ and $z_2, \ldots z_6$ corresponding to $x_1, \ldots x_5$, which concludes the proof of \textbf{ (i)}.

\textbf{(ii)}
Assume now $m_1=2$. Corollary~\ref{polalg1} immediately shows that it is possible to remove at most two neighbors of $x$ so that $x$ cannot be the center of a $F_4$. It excludes the possibility of a $F_6$ in this case. 
 \end{proof}

Figure~\ref{f5and7sample}-Left shows an example of $F_5$ in a conflict graph with $m_1=2$ and $m_2=1$ and Figure~\ref{f5and7sample}-Right shows an $F_7$ in a conflict graph with $m_1=3$ and $m_2=1$.

Theorems~\ref{lemW8} and~\ref{Th:F8} as well as Corollary~\ref{polalg1}  give us information about the structure of the subgraphs $\mathcal{C}_x$, $x\in V_{\mathcal{C}}$, induced by  $N[x]$: as already mentioned  a graph $G$ is $W_t$-free (resp. $F_t$-free) if for all vertex $x$, $G_x$ is $C_t$-free (resp. $P_t$-free), two classes of graphs that raised a lot of interest from researchers (see, eg., \citet{isgci,graphclassesbook}). 

We give now an example how to use the structure of neighborhoods to approximate the maximum independent set problem. It will give us algorithmic  applications of  Corollaries~\ref{polalg1} and~\ref{lem_P4}. 

A very classical approximation algorithm for maximum independent set in a graph $G=(V,E)$ is the  algorithm {\tt 2-opt} determining an independent set $\tilde S$ such that $\forall u \in \tilde S, \forall v,w \in V\setminus \tilde S, (\tilde S\setminus \{u\})\cup \{v,w\}$ is not and independent set (there is no 2-improvement). Let us revisit the very usual analysis of {\tt 2-opt} (see, e.g.,~\citet{ejorapprox})  which consists in considering the bipartite graph $B$ induced by $\tilde S \cup S^*$, where $S^*$ is an optimum independent set. Denote by $\lambda(G)=|\tilde S|$ the value of the solution provided by the algorithm on $G$ and 
$\alpha(G)=|S^*|$ the independent number of $G$. Then the number of edges of $B$ is at least $2\alpha(G) - \lambda(G)$ since 2-optimality ensures that, for every two edges $\tilde vu, \tilde v w$ in $B$ incident to the same vertex   $\tilde v\in\tilde S$, there is an additional edge incident to $u$ or $v$. On the other hand this number is at most $\Delta_{\alpha}\lambda(G)$, where $\Delta_{\alpha}$ is
the minimum among all optimal independent sets $S$ of the maximum number of vertices in $S$ a vertex can be adjacent to:
$$\Delta_{\alpha}=\min_{\substack{|S|=\alpha(G)\\ S\ {\rm independent}}} \max_{v\in V}|N(v)\cap S|.$$  
This implies:
\begin{equation}\label{eq:2opt}
\frac{\alpha(G)}{\lambda(G)}\leq \frac{(\Delta_{\alpha}+1)}{2}.
\end{equation}
 This remark emphasises that the usual maximum degree can actually be replaced by $\Delta_{\alpha}$.
 We propose below a strategy that can be used where large independent sets can be found in polynomial time in the neighborhood of each vertex. It leads to a new kind of approximation ratios depending on the independence number.
 \begin{theorem}\label{th:approx}
Consider a class of graphs $\mathcal{G}$ for which there is a polynomial time algorithm $A$ approximating the maximum independent set problem within the ratio $\rho$ for every graph $G_x$, where $G=(V,E)\in  \mathcal{G}$ and  $x\in V$.\\
 Then  the maximum independent set problem can be approximated within $\sqrt{3\rho(G)\alpha(G)/4}$.\end{theorem}
 \begin{proof}
 The strategy, for an input graph $G=(V,E)$ in $\mathcal{G}$ is as follows: 
 \begin{quote}
  Apply $A$ in all subgraphs $G_x, x\in V$;\\ 
  Compute also a {\tt 2-opt}-solution;\\
 Take the best solution among the $|V|+1$ different solutions obtained.
  \end{quote}
  Note first that, if $\alpha(G)\leq 2$, then {\tt 2-opt} finds an optimal solution, so we assume $\alpha(G)\geq 3$.
  
Suppose first that $\Delta_\alpha> \sqrt{4\rho(G)\alpha(G)/3}$. Then, when applied to a graph $G_x$ such that $\alpha(G_x)=\Delta_\alpha$, the algorithm $A$ computes a solution of value at least $\sqrt{4\alpha(G)/(3\rho(G))}$ leading to the approximation ratio  $\sqrt{3\rho(G)\alpha(G)/4}$.

Suppose now $\Delta_\alpha\leq \sqrt{4\rho(G)\alpha(G)/3}$, then Relation~(\ref{eq:2opt}) gives the ratio: 
$$\frac{\sqrt{\frac{4}{3}\rho(G)\alpha(G)}+1}{2}
\leq 
\sqrt{\rho(G)\alpha(G)}\frac{\frac{2}{\sqrt{3}}+\frac{1}{\sqrt{3}}}{2}=
\sqrt{\rho(G)\alpha(G)}\frac{\sqrt{3}}{2}
$$
where the inequality uses $\rho(G)\alpha(G)\geq\alpha(G)\geq 3$. 
In all cases, the ratio is at most $\sqrt{3\rho(G)\alpha(G)/4}$,
 which concludes the proof.
\end{proof}

Given an instance $I=\prec G_1,G_2,S \succ$, we denote by $\beta(I)$ the optimal value of the constrained alignment problem on $I$. 
\begin{proposition}\label{pro:approxsqrt}
Given an instance $\prec G_1,G_2,S \succ$ with conflict graph $\mathcal{C}$ and $m_2=1$,\\
\textbf{(i)} The constrained alignment problem can be approximated within $\sqrt{3\beta(I)/2}$;\\
\textbf{(ii)} If furthermore $m_1=2$, this is improved to  $\sqrt{\beta(I)}$.
\end{proposition}
\begin{proof}

This is a direct application of Theorem~\ref{th:approx}.\\
\textbf{(i)} Consider a vertex $x$ in the conflict graph $\mathcal{C}$ and the graph $\mathcal{C}_x$. We denote $\gamma(x)=abcd$. Using Fact~\ref{onenode},   
the $c_4$s in the neighborhood of $x$ in $\mathcal{C}$ can be partitioned into $\mar{N}_{x,a}$ and $\mar{N}_{x,b}$, where all $c_4$s in $N_{x,a}$ include $a$ while the others include $b$ but not $a$. This partition can be determined in polynomial time.  
Corollary~\ref{lem_P4} ensures that $\mathcal{C}[N_{x,a}]$ and $\mathcal{C}[N_{x,b}]$ are $P_4$-free. It is well known that the maximum independent set problem can be solved in linear time in $P_4$-free graphs (also called {\em cographs}) (see, e.g., \citet{golumbicbook}). Determining a maximum independent set in $\mathcal{C}[N_{x,a}]$ and $\mathcal{C}[N_{x,b}]$ and choosing the best one clearly solves the maximum independent set problem in  
\mar{$\mathcal{C}_x$} within an approximation ratio of~2. We apply Theorem~\ref{th:approx} with constant $\rho(G)=2$.\\
\textbf{(ii)} If $m_1=2$, then Corollary~\ref{polalg1} ensures that a maximum independent set can be found in polynomial time in graph $\mathcal{C}[N_{x}]$ and we apply Theorem~\ref{th:approx} with constant $\rho(G)=1$.
\end{proof}

Note that we obtain a ratio depending on the optimal value, which is not usual. Roughly speaking this result means that the logarithmic version of the problem \-- where the objective is to maximise the logarithm of the number of similarities in a legal alignment \-- is $\frac{3}{2}$-approximable. For instance, such a ratio for the maximum independent set in conflict graphs cannot be achieved in general graphs: the usual $n^{1-\varepsilon}$-hardness result~ (\citet{hastad}) states that, under some complexity hypothesis, the logarithm of the independence number cannot be approximated within a constant ratio.

Combining Proposition~\ref{pro:approxsqrt} with Corollary~\ref{cor:alpha} leads to the following ratio:

\begin{proposition}\label{cor:approxalpha2}
Given an instance $\prec G_1,G_2,S \succ$ with conflict graph $\mathcal{C}$ and $m_2=1$,\\
{\bf{(i)}} The constrained alignment problem can be approximated within the ratio:\\
{\centerline{$\min \left(\sqrt{3/2}\sqrt{|E_1|},\sqrt{3/2}\sqrt{|E_2|}, (1/2)\sqrt{3|V_1|\Delta_2},(1/2)\sqrt{3|V_2|\Delta_1}\right)$;}}
{\bf{(ii)}} If furthermore $m_1=2$, this ratio becomes:\\
{\centerline{$\min \left(\sqrt{|E_1|},\sqrt{|E_2|}, \frac{\sqrt{2}}{2}\sqrt{|V_1|\Delta_2},\frac{\sqrt{2}}{2}\sqrt{|V_2|\Delta_1}\right)$;}}
\end{proposition}
\begin{proof}
\mar{Using the definition of the approximation ratio guaranteed by an algorithm for a maximization problem, any upper bound of a guaranteed approximation ratio is still a guaranteed approximation ratio. Using Proposition~\ref{prop:reduction_alpha}-(iii), the optimal value $\beta(I)$ of the instance $I=\prec G_1,G_2,S \succ$ of the constrained alignment problem  equals the independence number $\alpha(\mathcal{C})$ of the related conflict graph. By Corollary~\ref{cor:alpha}, we deduce  
$$\beta(I)\leq \min \left(|E_1|,|E_2|, \frac{1}{2}|V_1|\Delta_2,\frac{1}{2}|V_2|\Delta_1\right).$$ 
Since the function $\sqrt{\cdot}$ is increasing, we conclude the proof using Proposition~\ref{pro:approxsqrt}.}
\end{proof}

\mar{Proposition~\ref{cor:approxD1} states the ratio $O(\Delta_1\log \log(\Delta_1)/\log(\Delta_1))$ in the case $m_2=1$ and $m_1$ is constant.
When $|E_1| \mar{\in} o(|\Delta_1|^2)$ or $|E_2|\mar{\in} o(|\Delta_1|^2)$, the ratio obtained in Proposition~\ref{cor:approxalpha2}-(i) can be better than the ratios we achieved  as functions of the maximum degree. In addition, Proposition~\ref{cor:approxalpha2}-(i) does not require any assumption about $m_1$. }

Given the known results for the maximum independent set, a natural question is whether the constrained alignment problem is $O(|V_1|/\log^2(|V_1|))$-approximable or even whether any approximation in $o(|V_1|\log\log(|V_1|)/\log(|V_1|)$ can be guaranteed. We give a first answer to this question in Theorem~\ref{th:approxlog} below.
The ratio $O(\sqrt{|E_1|})$ gives also a first answer for some classes of graphs satisfying $|E_1|\mar{\in} o(|V_1|^2)$ (but $\Delta_1$ still large). In particular, if $G_1$ is acyclic, we have $|E_1|\leq |V_1|$ and consequently:

\begin{corollary}\label{cor: approxacyclic}
Instances of the constrained alignment problem satisfying $m_2=1$ and $G_1$ acyclic can be approximated within the ratio $O(\sqrt{|V_1|})$.
\end{corollary}

Let now $I=\prec G_1,G_2,S \succ$ be an instance of the constrained alignment problem with conflict graph $\mathcal{C}$ and $m_2=1$; suppose we are given a subset $F\subset V_1$ and a maximal matching $M$ of $S[F\cup V_2]$, the subgraph of $S$ corresponding to similarity edges incident to $F$. We denote by $V_{\mathcal{C},F,M}$ the set of $c_4$s in $V_{\mathcal{C}}$  including at least one vertex of $F$ and no similarity edge $uv$ with $u\in F, v\in V_2, uv\notin M$; in other words, these $c_4$s include vertices in $F$ but only with similarity edges in $M$. Then, considering the subgraph $\mathcal{C}[V_{\mathcal{C},F,M}]$ of $\mathcal{C}$ induced by these $c_4$s, we have: 

\begin{lemma}\label{lem:P_4-new}
For any induced $P_3$, \mar{$x_1x_2x_3$}, in $\mathcal{C}[V_{\mathcal{C},F,M}]$, $x_1$ and $x_3$ have the same neighborhood  in $\mathcal{C}[V_{\mathcal{C},F,M}]$. In particular $\mathcal{C}[V_{\mathcal{C},F,M}]$ is $P_4$-free.
\end{lemma}

\begin{proof}
 Since $m_2=1$ and by definition of $V_{\mathcal{C},F,M}$, for every two conflicting $c_4$s in $V_{\mathcal{C},F,M}$, there must be   a vertex $u\in V_1\setminus  F$ and two disjoint vertices $v,v'\in V_2$ such that $uv$ is an edge of the former and $uv'$ an edge of the latter; moreover the other similarity edges of these $c_4$s are in $M$. Suppose we are given an induce $P_3$, $x_1x_2x_3$, in 
 $\mathcal{C}[V_{\mathcal{C},F,M}]$. There are such vertices $u,v,v'$, where $\gamma(x_1)$ and $\gamma(x_3)$ both include the edge $uv$ while $\gamma(x_2)$ includes $uv'$. Moreover, every $c_4$ in $V_{\mathcal{C},F,M}$  that conflicts  \mar{with} $\gamma(x_3)$ (resp. $\gamma(x_1)$) must include a similarity  edge $uw, w\neq v$ and thus it conflicts \mar{with} $\gamma(x_1)$ (resp. $\gamma(x_3)$), which concludes the proof.\end{proof}

We deduce the following theorem that gives a first step towards non trivial $o(|V_1|)$ approximation ratios. It corresponds to a sequence of approximation algorithms parametrized by $K$, called {\em approximation chain} in~\citet{demange-chain}.

\begin{theorem}\label{th:approxlog}
Consider instances of the constrained alignment problem satisfying $m_2=1$ and $m_1$ constant and let $K$ be a positive constant. One can find in polynomial time a legal alignment guaranteeing the approximation ratio of $\left\lceil \frac{|V_1|}{K\log(|V_1|)}\right\rceil$.
\end{theorem}

\begin{proof}
Consider an instance $I=\prec G_1,G_2,S \succ$ verifying the assumptions and denote by $\mathcal{C}$ the related conflict graph. We recall that $|V_1|\geq 2$. Denote by $\beta(I)=\alpha(\mathcal{C})$ the optimal value for the instance $I$.
Let $S^*$ be a maximum independent set of $\mathcal{C}$, $|S^*|=\alpha(\mathcal{C})$. 
Our strategy is to subdivide the vertex set of the conflict graph, $V_{\mathcal{C}}$, into $O\left(\frac{|V_1|}{\log(|V_1|)}\right)$ subsets such that the maximum independent set can be solved in polynomial time on the subgraph induced by each part. This subdivision is not necessarily a partition.

Fix a constant $K$ and partition vertices of $V_1$
 into $\mar{B_K}=\left\lceil \frac{|V_1|}{K\log(|V_1|)}\right\rceil$ sets of vertices $F_{j}, j=1, \ldots B_K$  with $|F_{j}|\leq  K\log(|V_1|)$. For each of them we denote by  $U_{j}$ the set of all $c_4$s in $V_{\mathcal{C}}$ including at least one vertex of  $F_{j}$ and by $W_{j}$ the graph $W_{j}=\mathcal{C}[U_{j}]$. Note that: 
 
 \begin{equation}\label{eq:union}
 \Union\limits_{j=1, \ldots, B_K} U_j= V_{\mathcal{C}}
 \end{equation}

We claim that there is a polynomial-time algorithm that computes, for every $j=1, \ldots, B_K$, a maximum independent set of $W_j$.  
Note first that the similarity edges involved in $c_4$s contributing to any independent set of $W_j$ form a matching of the graph $S[F_j\cup V_2]$ and consequently, is part of a maximal matching of this graph. Denoting by $\mathcal{M}_j$ the set of maximal matchings of $S[F_j\cup V_2]$, we deduce:

\begin{equation}\label{eq:maxstable}
 \alpha(W_j)=\max\limits_{M\in \mathcal{M}_j}\alpha(\mathcal{C}[V_{\mathcal{C},F_j,M}])
 \end{equation}

Lemma~\ref{lem:P_4-new} ensures that, for any fixed maximal matching $M\in \mathcal{M}_j$, $\mathcal{C}[V_{\mathcal{C},F_j,M}]$ is $P_4$-free. In this case a maximum independent set can be computed in polynomial (linear) time~(\citet{golumbicbook}). The related complexity is $O(|V_{\mathcal{C},F_j,M}|)\leq O(m_1|F_j||V_1|)$ since $c_4$s in $V_{\mathcal{C},F_j,M}$ include at least one edge of $M$ and $|M|\leq |F_j|$.  But $m_1$ is a fixed constant and $|F_{j}|\leq  K\log(|V_1|)$. Thus, we can exhaustively list all maximal matchings of $S[F_j\cup V_2]$ in $O\left( m_1^{K\log(|V_1|)} \right)=O\left( |V_1|^{K\log(m_1)} \right)$, a polynomial function.  

Our algorithm runs as follows: 
\begin{quote}
\begin{em}
For all $j=1, \ldots, B_K$ and all maximal matching $M$ of $S[F_j\cup V_2]$, compute  $\mathcal{C}[V_{\mathcal{C},F_j,M}]$ and a maximum independent set  \-- keep the best such solution. 
\end{em}
\end{quote}
\noindent
Computing each $\mathcal{C}[V_{\mathcal{C},F_j,M}]$ and a maximum independent set takes, for bounded $m_1$, $O\left(|V_1|\log(|V_1|)\right)$;  the whole complexity is then
$O\left(\log(|V_1|) |V_1|^{1+K\log(m_1)}\right)$, a polynomial function.
 
To complete the proof we need to justify it guarantees the required ratio. Equation~(\ref{eq:union}) ensures that the value $\lambda(I)$ of the computed solution satisfies:
$$
\lambda(I)=\max\limits_{j=1, \ldots, B_K}\alpha(W_j)\geq \max\limits_{j=1, \ldots, B_K}|S^*\cap U_j|\geq \frac{\beta(I)}{B_K}
$$
which shows that the related approximation ratio is $B_k=\left\lceil \frac{|V_1|}{K\log(|V_1|)}\right\rceil$. 
\end{proof}

\subsubsection{Cliques and Claws}\label{subsub:cc}

Next we present results regarding the existence of cliques as subgraphs of conflict graphs
for any $m_1$.
Assume that there is a clique $K_t$, $t\geq 1$, in $\mathcal{C}$ and let a corresponding $c_4$ associated with a vertex $x$ from this $K_t$ be $\gamma(x)=abcd$.
We partition all the corresponding $c_4$s in $K_t$ into three disjoint reference sets with respect to 
$\gamma(x)$.
Let $S_1, S_2$ consist of all the $c_4$s respectively conflicting $\gamma(x)$ with a  $Conf_{1a}$ and $Conf_{1b}$ configuration. 
Let $S_3$ be the set of all $c_4$s 
with other kinds of conflicts ( $Conf_{2}$, $Conf_{3a}$ or $Conf_{3b}$) with $\gamma(x)$ and $\gamma(x)$ itself.  

\begin{lemma}
\label{lem_edgesharing}
Given an instance $\prec G_1,G_2,S \succ$ with conflict graph $\mathcal{C}$ and the reference sets defined as above, then any pair of $c_4$s from different reference sets do not share a similarity edge.
\end{lemma}
\begin{proof}
Note that since the pair of $c_4$s 
correspond, in $\mathcal{C}$, to different vertices of the same clique $K_t$, they should conflict
by sharing at least one vertex from $G_1$. 
We consider two cases. For the first case assume one of the $c_4$s is in $S_1$ or $S_2$, and the other is in $S_3$.
Without loss of generality assume the former $c_4$ is in $S_1$ including vertices $s$ and $a$ from $G_1$, where $s\neq b$.
Since the latter $c_4$ from $S_3$ includes both $a, b$ from $G_1$, the pair of $c_4$s can only share the vertex $a$ from 
$G_1$ giving rise to a $Conf_{1a}$ or a $Conf_{1b}$ conflict between them. 
For the second case assume one of the $c_4$s is in $S_1$ and 
the other is in $S_2$. In this case the former must have a $Conf_{1a}$ conflict whereas
the latter must have a $Conf_{1b}$ conflict with the reference $\gamma(x)=abcd$. Since $a\neq b$ 
the $c_4$s from $S_1$ and $S_2$ can only share one vertex from $G_1$, thus giving rise to a $Conf_{1a}$ or a $Conf_{1b}$ conflict between the pair. 
In both cases we show that both $c_4$s are in $Conf_{1a}$ or $Conf_{1b}$ conflict with each other. 
The fact that any pair of $c_4$s with a $Conf_{1a}$ or a $Conf_{1b}$ conflict do not share a similarity edge
completes the proof.
\end{proof}

\begin{theorem}
\label{lem_clique}
Given an instance $\prec G_1,G_2,S \succ$ with conflict graph $\mathcal{C}$ and $m_2=1$, the maximum size of any clique in $\mathcal{C}$ is 
${m_1}^2$, or equivalently $\mathcal{C}$ is $K_{1+{m_1}^2}$-free.
\end{theorem}

\begin{proof} 
We consider two cases. 

 \emph{ Case-1:} We first handle the case where at least one of $S_1, S_2$ is empty. 
Assume without loss of generality $S_1$ is empty. Let $p$ be the number of similarity edges
incident to $b$ in the $c_4$s of $S_3$. Since each pair of similarity edges, one incident to $a$ and one
incident to $b$, gives rise to at most 
one $c_4$, the number of $c_4$s in $S_3$ is at most $m_1p$. 
By Lemma~\ref{lem_edgesharing}, $c_4$s in $S_3$
cannot share an edge from $S$ with the $c_4$s in $S_2$. This implies that the 
number of similarity  edges incident to $b$ in the $c_4$s of $S_2$ is at most $m_1-p$. 
Let $bc'$ be such an edge and let $S_{bc'}$ denote the set of $c_4$s in $S_2$ sharing $bc'$. Since any pair of $c_4$s from $S_{bc'}$ share a similarity edge, they 
must be in a $Conf_{3a}$ or  $Conf_{3b}$ conflict with each other
and thus must share one more vertex from $G_1$
in addition to the vertex $b$. This implies that $|S_{bc'}|\leq m_1$ which further implies a total of at most $(m_1-p)m_1$ $c_4$s in $S_2$. 
The clique consisting of $c_4$s from $S_2, S_3$ has at most ${m_1}^2$ vertices. 

 \emph{ Case-2:} Now we handle the case where $S_1$ and $S_2$ are both not empty. 
It must be the case that all $c_4$s in $S_1\cup S_2$ must share a vertex $e$ from $G_1$
such that $e\neq a$, $e\neq b$. This is due to the fact that any pair of $c_4$s, one from $S_1$ the other from $S_2$,
can only have a $Conf_{1a}$ or $Conf_{1b}$ conflict and the shared node in 
\mar{this} conflict cannot be neither $a$ nor $b$. 
Let $p,q$ be the number of edges from $S$ incident respectively to $a$ and $b$ in the $c_4$s of $S_3$. 

The number of $c_4$s in $S_3$ is at most $pq$. 
By Lemma~\ref{lem_edgesharing}, the 
number of similarity edges edges incident to $a$ in the $c_4$s of $S_1$ are at most $m_1-p$ and 
the number of similarity edges incident to $b$ in the $c_4$s of $S_2$ are at most $m_1-q$.
Let $r$ be the number of 
similarity edges incident to $e$ in the $c_4$s of $S_1$. Again by Lemma~\ref{lem_edgesharing},
the number of similarity incident to $e$ in the $c_4$s of $S_2$ are at most $m_1-r$.
This implies that the maximum number of $c_4$s in $S_1$ and $S_2$ are respectively $(m_1-p)r$
and $(m_1-q)(m_1-r)$. The size of the clique consisting of 
$c_4$s from all three reference sets 
is at most $pq+(m_1-p)r+(m_1-q)(m_1-r)$, where $1\leq p,q,r\leq m_1$.
Without loss of generality let $p\leq q$. Then we have
$pq+(m_1-p)r+(m_1-q)(m_1-r)\leq pq+(m_1-p)m_1 \leq {m_1}^2$.
\end{proof}

\begin{figure}[t]	   
\begin{center}	   
\hspace*{-.3cm}
\includegraphics[width=12.5cm]{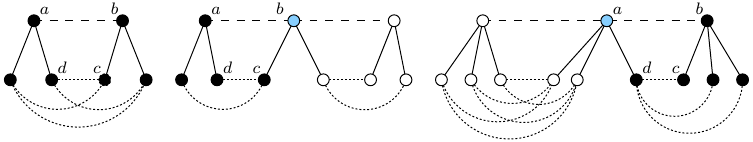} 
\caption{\sf Sample $\mathcal{C}_U$s giving rise to $K_{{m_1}^2}$s in their
respective conflict graphs.
The reference $c_4$ is $\gamma(x)=abcd$. The first two  show sample constructions for $m_1=2$ and the last for $m_1=3$.
The employed reference sets as
described in the proof of Theorem~\ref{lem_clique} are as follows:  \emph{ (Left)} All $c_4$s are in $S_3$, 
 \emph{ (Middle)} $c_4$s in $S_3$ are those induced by black vertices and $b$, $c_4$s in $S_2$ are those induced by 
white vertices and $b$, 
 \emph{ (Right)} $c_4$s in $S_3$ are those induced by black vertices and $a$, $c_4$s in $S_1$ are those induced by 
white vertices and $a$. 
} 
\label{cliquesample}	   
\end{center}	   
\end{figure}

We note that $K_{{m_1}^2}$ is possible in a conflict graph $\mathcal{C}$ 
for any positive integer $m_1$. Indeed  \emph{ Case-1} of the above proof provides
an actual construction method; see Figure~\ref{cliquesample}.

Note that under the setting of $m_2=1$, the size of $V_\mathcal{C}$ is bounded by $|E_2|$ (Lemma~\ref{vertexlemma}).
It is known that 
the maximum independent set problem is fixed-parameter tractable, parameterized by the size of the 
output, in the class of $K_r$-free 
graphs for constant integer $r$~(\citet{DBLP:conf/swat/RamanS06,DabrowskiLMR12}). Combining this result
with Theorem~\ref{lem_clique}, leads to the following result:

\begin{proposition} 
The constrained alignment problem is fixed-parameter tractable when $m_1$ is any fixed 
positive integer constant and 
$m_2=1$. 
\label{cor14}
\end{proposition} 

Note that the analogous result in~\citet{Fertin200990} 
is more restrictive since it applies only to the bounded degree
graphs.


We conclude this part by considering induced claws in conflict graphs. A  \emph{ d-claw} is an induced subgraph of an undirected graph, 
that consists of an independent set of $d$ vertices, called $talons$, and 
the $center$ vertex that is adjacent to all vertices in this set. 
Let $\Delta_{min}=min(\Delta_1, \Delta_2)$.

\begin{theorem}
\label{claw}
Given an instance $\prec G_1,G_2,S \succ$ with conflict graph $\mathcal{C}$ and $m_2=1$, then\\ $\mathcal{C}$ is $(2\Delta_{min}+2)$-claw-free.
\end{theorem}

\begin{proof}
Let $abcd$ be the corresponding $c_4$ associated with the center vertex of a claw. 
Let $abkl$ be the $c_4$ corresponding to a talon that has a $Conf_{2}$, $Conf_{3a}$ or $Conf_{3b}$  conflict with $abcd$. Since 
any other $c_4$ corresponding to a talon with a $Conf_{2}$, $Conf_{3a}$ or $Conf_{3b}$ conflict with $abcd$ would also 
have to share the vertices $a,b$, by Fact~\ref{twonodes}, 
it would conflict with $abkl$, which is not possible. Thus,
the total number of talons the $c_4$s of which create a $Conf_{2}$, $Conf_{3a}$ or $Conf_{3b}$ conflict with $abcd$
is at most 1. With regards to the number of talons corresponding, in $\mathcal{C}_U$, to  a  $Conf_{1a}$ or $Conf_{1b}$ 
conflict with $abcd$, we first count the maximum number  of possible $Conf_{1a}$ conflicts. 
Let $apqr$ be the $c_4$ of a talon with a $Conf_{1a}$  conflict with $abcd$. 
Any talon the $c_4$ of which conflicts \mar{with} $abcd$ \mar{through} a $Conf_{1a}$ conflicting configuration  must share the 
edge $ar$, since otherwise it would conflict with $apqr$. 
Any $G_1$ edge incident to vertex $a$ can belong only
to a single $c_4$ since otherwise by Fact~\ref{twonodes}
there would be a conflict between a pair of $c_4$s 
corresponding to talons. In addition, since $m_2=1$,
every $G_2$ edge can belong only to a single $c_4$. Thus
the number of talons inducing in $\mathcal{C}$ $Conf_{1a}$ conflicts is bounded by
$\Delta_{min}$. The same holds for $Conf_{1b}$ conflicts
giving rise to at most $(2\Delta_{min}+1)$ talons that 
are independent.
\end{proof}

The above theorem in conjunction with the result of~\citet{Berman00} which states 
that a $d/2$ approximation for maximum independent sets can be found 
in polynomial-time for $d$-claw free graphs gives rise to 
a polynomial-time approximation for the constrained alignment problem. 

\begin{proposition}  
If $m_2=1$, the constrained alignment problem can be $(\Delta_{min}+1)$-approximated in polynomial time.
\label{cor16}
\end{proposition}

\mar{This results improves (by at least a factor $5/6$) the approximation ratio of $2\lceil3\Delta_1/5\rceil$
for even $\Delta_1$ and $2\lceil(3\Delta_1+2)/5\rceil$ for odd 
$\Delta_1$ proposed in~(\citet{Fertin200990}). As already mentioned, the $o(\Delta_1)$-approximation stated in Proposition~\ref{cor:approxD1} already improved it. If $\Delta_2\in O(\Delta_1)$, then the ratio in Proposition~\ref{cor:approxD1} is better but if $\Delta_2\in o(\Delta_1)$, then the ratio established in  Proposition~\ref{cor16} can be better than the one in Proposition~\ref{cor:approxD1}.}

\mar{We conclude Subsection~\ref{subsec:struct-approx} by emphasizing that some of our structural results lead to a new hardness result for the maximum independent set problem.  Indeed, the combination of Lemma~\ref{deglemma}, Theorem~\ref{lemW8}, Theorem~\ref{Th:F8} and Theorem~\ref{Th:F8} states that, for any instance of the constrained alignment problem with $m_1=2, m_2=1$ and $G_1,G_2$ are of bounded degree, the related conflict graph is 
$\left(W_t\ (t\geq 5),F_6,K_5\right)$-free and of bounded degree.}

\mar{On the other hand,  the constrained alignment problem is shown to be APX-complete  even for the case where
$m_2=1, m_1=2$, both $G_1, G_2$ are bipartite and of  bounded degree (\citet{Fertin200990}).  As a consequence, we derive the following new hardness result for the maximum independent set problem:}

\begin{proposition}
The maximum independent set problem is APX-complete in the class of bounded degree, $\left(W_t\ (t\geq 5),F_6,K_5\right)$-free graphs. 
\end{proposition}

\subsection{Acyclic $G_1$ and $m_2=1$}\label{subsec:acyclic}

We conclude by investigating the case where  $G_1$ is acyclic and  $m_2=1$ for which the constrained alignment
problem is shown to be polynomial-time solvable in~\citet{AbakaBE13} without a precise complexity analysis. We refine this previous analysis by showing that in this case the conflict graph has a very particular structure. More precisely it is {\em weakly triangulated} ($C_t$-free and $\overline{C_t}$-free, for $t\geq 5$). Weakly triangulated graphs are known to be perfect~(\citet{hayward}) and moreover the maximum independent set problem can be solved in $O(|V||E|)$ in a graph $G=(V,E)$~(\citet{hayward2}). It allows us to deduce a new polynomial-time algorithm for this case with its complexity analysis. This illustrates again how the structure of the conflict graph can be used to achieve algorithmic results. 

We need two technical lemmas; remind that, given an instance $\prec G_1,G_2,S \succ$ the graph  $\mathcal{C}_U$ is defined in Subsection~\ref{subsec:conflict}.  
\begin{lemma}
\label{edgesharing}
Given an instance $\prec G_1,G_2,S \succ$ with conflict graph $\mathcal{C}$ where $G_1$ is acyclic.
Suppose a 
$P_k$, denoted by $p$, is an induced subgraph of the conflict graph $\mathcal{C}$. For $k\geq 4$, the 
$c_4$s of $\mathcal{C}_U$ corresponding to the 
end vertices of $p$ neither share a vertex nor an edge in $\mathcal{C}_U$. 
\end{lemma}

\begin{proof}
Suppose first that $p_k$ is an induced  $P_k$, $k\geq 4$ in the conflict graph and consider the two $c_4$s in $\mathcal{C}_U$ 
\mar{associated with} the extremities of $p_k$. 
 They can neither
share an edge from  $G_2$ nor a vertex from $G_2$ without sharing a similarity edge incident to it ($m_2=1$).
They also cannot share an edge from $G_1$ nor a vertex from $G_1$ without sharing a similarity edge incident to it since otherwise they would conflict.
Thus we simply need to show that they do not share 
a similarity edge.

The proof is 
by \mar{strong} induction on $k$. For the base case  $k=4$, suppose there is a $P_4$ 
\mar{$x_1x_2x_3x_4$} in the conflict graph and that 
the $c_4$s $\gamma(\mar{x_1})$ and $\gamma(\mar{x_4})$ share a similarity edge. Let $\gamma(x_1)=abcd$  and 
let $\gamma(x_4)=befc$  with the edge $bc\in S$ in common. There are two cases for 
$\gamma(x_2)$. Since it does not conflict with $\gamma(x_4)$, it  must either be of the form $gahi$, where $h,i\notin \{d,c,f\}$ ($Conf_{1a}$ conflict with $\gamma(x_1)$) or of the form $abch$ where $h\notin \{d,c,f\}$ ($Conf_{3a}$ conflict with $\gamma(x_1)$). 
Now considering $\gamma(x_3)$, to create a conflict with $\gamma(x_4)$, 
one edge of $\gamma(x_3)$ must be $ej$ where $j\notin \{d,c,f,h,i\}$. Placing the other edge of $\gamma(x_3)$ from 
\mar{$E_S$} such that 
it creates a conflict with $\gamma(x_2)$ is now impossible, since it either gives rise to a cycle in $G_1$ (cycle \mar{$abe$} or \mar{$abeg$}, $g\notin \{a,b,e\}$)  or creates a conflict with $\gamma(x_1)$.

For the inductive part, assume that the lemma holds for all $k'$ where $4\leq k'< k$. Consider the $c_4$s of $\mathcal{C}_U$ corresponding to the vertices of a $P_{k}$, 
\mar{$x_1\cdot x_{k-1}x_k$} in the conflict graph. 
Let $\gamma(x_1)=abcd$ and $\gamma(x_{k-1})=efgh$. By the inductive hypothesis, these two $c_4$s are disjoint. Consider in 
$\mathcal{C}_U$ the subset $H$ of edges from $E_1$ that belong to the $c_4$s  associated with vertices in the  $P_{k-1}$, 
$x_1\cdots x_{k-1}$. $H$ contains in particular $ab$ and $ef$. Edges in $H$ form a connected subgraph  of $G_1$ and without loss of generality we assume that the shortest path  between $b$ and $e$  contains neither $a$ nor $f$. This path  has at least one edge; let its last edge be $e'e\in G_1$ which is part of $\gamma(x_j)$ for some $j$, $1\leq j\leq (k-2)$. Let $\gamma(x_j)=e'exy$  and $\gamma(x_k)=pqrs$. If at least one of $p, q$ is on the path, say $p$, and $p\neq e'$, $p\neq e$, then $q$ must be one of $e$ or $f$, since $pqrs$ must conflict with $efgh$, which implies a cycle in $G_1$.  If $p=e'$ then $q=e$ to create a conflict with $efgh$ without creating a cycle in $G_1$.   
This implies a conflict between $pqrs$ and $e'exy$, which is impossible since  
$x_1\cdots x_k$ is an induced path. Finally, if $p=e$, $q\neq e'$, $abcd$ and $pqrs$ do not share a similarity edge, which concludes the proof.
\end{proof}

The subgraph of $\mathcal{C}_U$ that corresponds to an induced 
$P_k$, 
\mar{$x_1\cdots x_{k-1}x_k$} in the conflict graph $\mathcal{C}$
is said to be in  \emph{ chain configuration} if each $c_4$, $\gamma(\mar{x_i})$, $i=1, \ldots k$, 
shares only a distinct $G_1$-vertex with the next $c_4$, $\gamma(\mar{x_{i+1}})$, if $i<k$ and one with the previous one, $\gamma(\mar{x_{i-1}})$, if $i>1$
 and does not share any $G_1$- or $G_2$-vertices with any other of these $c_4$s; see Figure~\ref{chain} for a sample chain configuration. 
Note that a chain configuration imposes a certain order of the involved $c_4$s in $\mathcal{C}$. 

\begin{figure}[t]	   
\begin{center}	   
\includegraphics[width=8cm]{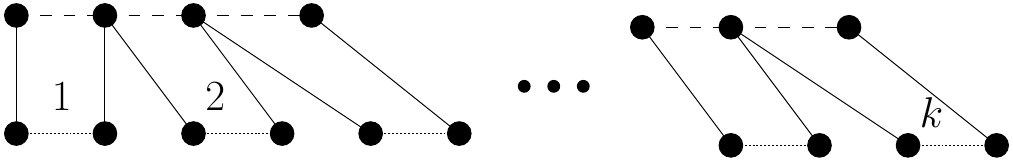} 
\caption{\sf Chain configuration of a $k$-path in $\mathcal{C}_U$. } 
\label{chain}	   
\end{center}	   
\end{figure}

\begin{lemma}
\label{threeinchain}
\mar{Let $\prec G_1,G_2,S \succ$ be an instance of the constrained alignment problem with conflict graph $\mathcal{C}$ and acyclic $G_1$. Let $x_1,x_2,x_3$ be three vertices of $\mathcal{C}$ such that $\gamma(x_1)$ and $\gamma(x_3)$ do not share a vertex nor an edge in $\mathcal{C}_U$ and $\gamma(x_2)$ conflicts  with both $\gamma(x_1)$ and $\gamma(x_3)$. Then $\gamma(x_1), \gamma(x_2)$ and $\gamma(x_3)$ must be in chain configuration where $\gamma(x_2)$ is in the middle in any left to right order.}  
\end{lemma}
\begin{proof}
If 
the conflict configuration of $\gamma(x_1)$ and $\gamma(x_2)$ were of $Conf_{2}, Conf_{3a}$ or $Conf_{3b}$, then $\gamma(x_3)$ could conflict with 
 $\gamma(x_2)$ only if it 
shared a vertex in $\mathcal{C}_U$ (more specifically a vertex from $G_1$, since $m_2=1$) with $\gamma(x_1)$, which is not possible. 
It follows that the only possible conflict configuration for 
 $\gamma(x_1)$ and $\gamma(x_2)$ 
is $ Conf_{1a}$ or $Conf_{1b}$. Applying the same reasoning to the conflict between $\gamma(x_2)$ and $\gamma(x_3)$, it follows that 
all three must be in chain configuration, where $\gamma(x_2)$ is in the middle of the chain in any left to right order. 
\end{proof}

We are now ready to prove the main result of this subsection.

\begin{theorem}
\label{th:weaklytriangl}
Given an instance $\prec G_1,G_2,S \succ$ with conflict graph $\mathcal{C}$ such that $G_1$ is acyclic and $m_2=1$ then $\mathcal{C}$ is weakly triangulated. 
\end{theorem}
\begin{proof} 
Assume first for the sake of contradiction that $C_k$  is an induced subgraph of a conflict graph for $k\geq 5$.
\mar{The cycle $C_k=x_1\cdots x_{k-1}x_kx_1$ can be divided into $(k-2)$ 
$P_3$s: $x_1x_2x_3$, $x_2x_3x_4$, $\ldots$, $x_{k-2}x_{k-1}x_k$. 
We show that for each $P_3$s, $x_ix_{i+1}x_{i+2}$, $1\leq i\leq k-2$, $\gamma(x_i), \gamma(x_{i+1}$ and $\gamma(x_{i+2})$ must be in chain configuration in $\mathcal{C}_U$. There exists indeed a $(k-1)$-path starting at vertex $x_i$ and ending 
at vertex $(x_{i+2})$ as an induced subgraph of $C_k$, thus of $\mathcal{C}$ as well. Since $k\geq 5$, by Lemma~\ref{edgesharing}, 
the $c_4$s $\gamma(x_i)$ and $\gamma(x_{i+2})$, neither   
share a vertex nor an edge in $\mathcal{C}_U$. By definition of $C_k$, they do not conflict. 
Since $\gamma(x_{i+1})$ conflicts with both $\gamma(x_i)$ and $\gamma(x_{i+2})$, by Lemma~\ref{threeinchain}, all three must be in chain 
configuration, where $\gamma(x_{i+1})$ is in the middle of the configuration in any left to right order. 
Since each of the $k-2$ triples $(x_1,x_2,x_3)$, $(x_2,x_3,x_4)$, $\ldots$, $(x_{k-2},x_{k-1},x_k)$ is in chain configuration similarly,
the $c_4$s corresponding to the whole path $x_1x_2\cdots x_{k-1}x_k$, $\gamma(x_1), \gamma(x_2) \ldots \gamma(x_{k-1}\gamma(x_k)$ 
are in chain configuration in this order. This implies there cannot be a conflict between $\gamma(x_1)$ and 
 $\gamma(x_k)$, since in the opposite case it would correspond to a cycle in graph $G_1$. This contradicts the fact vertices $x_1$ and $x_k$ are adjacent in $\mathcal{C}$}. 
 
To prove that $\overline{C_k}$ is not an induced subgraph in any conflict graph, we first note that
since $\overline{C_5}$ is isomorphic to $C_5$, $\overline{C_5}$ cannot be an induced subgraph 
of any conflict graph. For $k>5$, we 
prove it by contradiction as well. Suppose $\overline{C_k}$, with $k>5$ is an induced subgraph of $\mathcal{C}$.
\mar{Consider the path $x_{k-1}x_1x_{k-2}x_k$. This is an induced $4$-path 
in $\overline{C_k}$, thus also in $\mathcal{C}$. By Lemma~\ref{edgesharing}, $\gamma(x_k)$ and $\gamma(x_{k-1})$ 
do not share any vertex and neither                                                                         an edge in $\mathcal{C}_U$. By definition of $\overline{C_k}$ they do not conflict.
Since $\gamma(x_2)$  
conflicts with both $\gamma(x_{k-1})$ and $\gamma(x_k)$ (vertex $x_2$ is adjacent to $x_{k-1}$ and $x_k$ in $\overline{C_k}$), by Lemma~\ref{threeinchain},
$\gamma(x_{k-1}), \gamma(x_2)$, and $\gamma(x_k)$ must be in chain configuration in that order. By the same reasoning 
$\gamma(x_{k-1}), \gamma(x_3)$, and $\gamma(x_k)$ must be in chain configuration again in the same order. However this is 
only possible if  $\gamma(x_2)$ and $\gamma(x_3)$ are identical, which constitutes a contradiction.}   
\end{proof}


In~\citet{AbakaBE13}, the constrained alignment
problem is shown to be polynomial-time solvable  if $G_1$ is acyclic and $m_2=1$, using a dynamic programming approach. Theorem~\ref{th:weaklytriangl} gives an alternative proof using the $O(|V||E|)$ algorithm for maximum independent set in weakly triangulated graphs~(\citet{hayward2}). In this case, 
 Lemma~\ref{edgelemma} and Theorem~\ref{th:zagreb-bound}.\textbf{ (i)} give  $|E_\mathcal{C}|\leq \frac{1}{2}m_1^3(m_1-1)|V_1|(\Delta_1+3)$  while Lemma~\ref{vertexlemma} gives $|V_\mathcal{C}|\leq {m_1}^2 |V_1|$.
 The related complexity is $O(\Delta_1|V_1|^2)$ if $m_1$ is a fixed constant. 


\section{Concluding remarks}\label{sec:conclude}

We consider the constrained alignment of a pair of input graphs. 
We heavily investigate the combinatorial properties of a conflict graph 
which was introduced in~\citet{Fertin200990} but not studied in detail as far 
as graph theoretical properties are concerned. \mar{The constrained alignment problem appears as being closely related to the maximum independent set problem in conflict graphs.}
\mar{Known results on the maximum independent set problem associated with} several structural properties of conflict graphs 
lead to algorithmic results \mar{for the constrained alignment problem}: a polynomial-time  case, polynomial-time
approximations, and fixed-parameter tractability results.

Our contribution is twofold. First, we improve known approximation results \mar{for the constrained alignment problem} in several ways.  In terms of the maximum degrees of $G_1$ and $G_2$,  we propose the first $o(\Delta_1+\Delta_2)$-approximation using basic properties of conflict graphs. This ratio is similar to the known approximation ratios, function of the maximum degree, for the maximum independent set problem in $G_1$ and $G_2$. This is due to the fact that the maximum degree of the conflict graph is of the same order as $\Delta_1+\Delta_2$. On the contrary, the number of vertices of the conflict graph does not allow to derive interesting results from known maximum independent set approximation ratios \mar{expressed as functions of} the number of vertices. We design the first non trivial approximation result with a ratio function of $|V_1|$ for the constrained alignment problem.  The related ratio, $O(\frac{|V_1|}{\log(|V_1|)})$, is better than $O\left(\frac{|V_1|\log\log(|V_1|)}{\log(|V_1|)}\right)$ directly obtained from ratios function of the degree but it is still large compared to the $O(\frac{|V_1|}{\log^2(|V_1|)})$-approximation of the maximum independent set in $G_1$. 
\begin{quote}
A first open question raised by these results is 
to strengthen  hardness approximation results for the constrained alignment problem and in particular to investigate whether a ratio of $O(|V_1|^{1-\varepsilon})$ or even a constant approximation can be achieved in polynomial time. \mar{It is indeed well-known that such ratios cannot be achieved for the maximum independent set problem}.
\end{quote}
We also derive a ratio of $O(\sqrt{\beta(I)})$ \mar{for the constrained alignment problem with $m_2=1$}, while a similar result is not possible for the maximum independent set in general graphs unless P=NP. This kind of unusual result (Theorem~\ref{th:approx} and Proposition~\ref{pro:approxsqrt}) seems  interesting to investigate.
\begin{quote}
Studying more in detail in which extend similar $\rho(\alpha(G))$-approximation results, parametrised by the size of the optimal solution,  can be obtained for the maximum independent set problem or other problems is  an interesting line of research raised by this work.\end{quote}
 Our second contribution is about structural results on the conflict graph. \mar{After general considerations (Subsection~\ref{sec:anym1m2}) valid for any $m_1,m_2$, we focus on the case $m_2=1$ (any $m_1$) that has been considered in~\cite{Fagnot2008,Fertin200990}}. For this case, we investigate graph classes that all can be characterised by forbidden subgraphs $H$ in the neighborhood of any vertex: the case where $H$ is a large clique or a large independent set is pretty usual, it just corresponds, in the whole graph, to  exclude large cliques and/or large claws. The case where $H$ is an induced path or cycle \-- thus excluding wheels or fans \-- is less current even thought the classes of $H$-free graphs themselves have 
raised great interest in the recent years: for instance many researches deal with maximum independent set problem in graphs excluding $C_t$ or $P_t$ for some $t$. In particular, it is known that the maximum independent set  problem is  polynomial for $P_5$-free graphs~(\citet{ISP5}) and the case of larger $t$ is still unknown. For instance, if the maximum independent set problem was polynomial in $P_8$-free graphs, then combining Theorems~\ref{Th:F8} and~\ref{th:approx} would lead to a $\sqrt{\beta(I)}$-approximation for the constrained alignment problem. Note also that $P_4$-free subgraphs of $\mathcal{C}$ play a crucial role for several results in this work; it would be interesting to study whether this approach can be applied in a more general setting using $P_5$-free subgraphs instead of $P_4$-free ones. 
\begin{quote}
So far, this work motivates the study of maximum independent set in graphs excluding fans and/or wheels
and more generally in classes of graphs with forbidden subgraphs in the neighborhood of any vertex.
\end{quote}
Theorem~\ref{th:approx} gives a first step in this direction with a strategy to efficiently solve the maximum independent set in a graph $G=(V,E)$ when a good solution can be found in all subgraphs $G_v, v\in V$. 

\mar{As a first attempt to investigate properties of the conflict graph to derive efficient algorithms, the case $m_2=1$ revealed to be very rich and promising as it allows to derive interesting properties of the conflict graph, even for large values of $m_1$. Even if, as outlined in~\cite{Fagnot2008}, the underlying biological application motivates the case where both $m_1$ and $m_2$ are small, it is worth to note that reduction of the constrained alignment problem to a maximum independent set problem in the conflict graph is valid for any values of $m_1, m_2$.   As mentioned in Subsection~\ref{sub:def}, the largest possible values for $m_1, m_2$ ($m_1=|V_2|, m_2=|V_1|$)  leads to   another well studied problem, the maximum common edge subgraph problem that includes many well-known problems like the maximum clique problem. If $m_1, m_2$ are large, the size of the conflict graph increases very fast and it becomes dense. As a consequence,  this approach is likely to lead to good computational results if at least one of $m_1, m_2$ is small.} 

\begin{quote}
    \mar{The last research direction we want to outline is to investigate properties of conflict graphs for larger values of $m_2$ for, at least, some classes of graphs $G_1$ and $G_2$.}
\end{quote}

\section*{Acknowledgements}
We are grateful to the anonymous reviewers for their helpful comments and suggestions.

\bibliographystyle{abbrvnat}
\bibliography{journal}

\end{document}